\newtheorem{theorem}{Theorem} 
\newtheorem{lemma}{Lemma}[section] 
\newtheorem{proposition}{Proposition} 
\theoremstyle{definition}
\newcommand{\E}{\mathbb{E}}
\newcommand{\R}{\mathbb{R}}
\renewcommand{\P}{\mathbb{P}}
\newcommand{\bs}{\boldsymbol}
\begin{document}

\title{A Probit Network Model with Arbitrary Dependence}

\author{Ting Yan\thanks{Department of Statistics, Central China Normal University, Wuhan, 430079, China.
\texttt{Email:} tingyanty@mail.ccnu.edu.cn.}
\\
Central China Normal University
}
\date{}

\maketitle

\begin{abstract}

In this paper, we adopt a latent variable method to formulate a network model with arbitrarily dependent structure.
We assume that the latent variables follow a multivariate normal distribution and a link between two nodes
forms if the sum of the corresponding node parameters exceeds the latent variable.
The dependent structure among edges is induced by the covariance matrix of the latent variables.
The marginal distribution of an edge is a probit function. We refer this model to as the \emph{Probit Network Model}.
We show that the moment estimator of the node parameter is consistent.
To the best of our knowledge, this is the first time to derive consistency result in a single observed network with globally dependent structures.
We extend the model to allow node covariate information.

\vskip 5 pt \noindent
\textbf{Key words}:  Consistency; Degree; Moment estimation; Network data \\

{\noindent \bf Mathematics Subject Classification:} 	62F12, 91D30.
\end{abstract}

\vskip 5 pt


\section{Introduction}
Networks or graphs provide a convenient representation to record complex interactive behaviors among a set of actors such as
friendships between peoples in social networks, paper citation in scientific literature, protein interaction in biology,
information diffusion between users in social media (e.g., Twitter, Sina Weibo) and so on.
With the development of information technology, the collected network datasets are increasing dramatically and large network datasets having millions of nodes are also
available nowadays. As a result, network data analyses have gained great interests not only from statisticians but also from
social scientists, mathematicians, physicists and computer scientists in recent years.
Many statistical approaches are proposed to analyze network data; see \cite{Robins2007173}, \cite{Goldenberg2010}, \cite{Fienberg2012} for some recent reviews.
The book by \cite{Kolaczyk2009} provides a comprehensive description on statistical analyses of network data.

One of basic measurements for an graph is its degrees. They preliminarily summarize the information contained in the graph and
many network models depend on the information of degrees directly or indirectly.
The degree heterogeneity is one important feature widely observed in realistic networks.
As an example in a well-known yeast dataset [\cite{yeast2002}] available at the R package ``igraphdata",
the node degree varies from the minimum value $1$ to the maximum value $118$ 
in its largest connected subgraph that has $2375$ nodes.
The $p_1$ model proposed by \cite{Holland:Leinhardt:1981} is the first one to model degree variation, in which
the bi-degrees of nodes and the number of reciprocated dyads form the sufficient statistics for the exponential distribution on directed graphs.

Since the work of \cite{Holland:Leinhardt:1981}, a class of network models parameterized by a set of node parameters are proposed,
including the chung-lu model [\cite{Chung:Lu:2002}] with the expected degrees as the parameters, the $\beta$-model [\cite{Chatterjee:Diaconis:Sly:2011,Blitzstein:Diaconis:2011,Park:Newman:2004}]
null models [\cite{Perry:Wolfe:2012}] and maximum entropy models [\cite{Hillar:Wibisono:2013}].
In these models, each node is assigned one parameter and therefore the number of parameters increases as the network size grows.
This leads to that asymptotic inference is nonstandard.
The asymptotic properties of the estimators (e.g., MLE or moment estimator) have been derived until recently [e.g., \cite{Chatterjee:Diaconis:Sly:2011,Rinaldo2013,Hillar:Wibisono:2013,Yan:Leng:Zhu:2016}].  In particular, \cite{Chatterjee:Diaconis:Sly:2011} prove the consistency of the maximum likelihood estimator (MLE) in the $\beta$-model
and \cite{Yan:Xu:2013} derive its asymptotic normal distribution.
\cite{Yan:Qin:Wang:2015} establish a unified theoretical framework for this class of models under which the consistency and asymptotical normality of the moment estimator hold.

In all above mentioned models, an assumption that all dyads are independent is imposed.
This is unrealistic since most of networks exhibit dependent structures such as transitivity and local clustering.
To admit dependence among edges and inherit good asymptotic properties of the estimator in dyad independent models, we propose a new model--\emph{Probit Network Model}. 
The model is implemented by using a latent variable method.
We assume that the latent variables are drawn from a multivariate normal distribution and a link between two nodes
forms if the sum of the corresponding two node parameters is no less than the latent variable.
The dependent structure among edges is induced by the covariance matrix of the latent variables.
We provide several cases through restricting the correlation parameters of the matrix to illustrate how the property of edge dependence
(e.g., $k$-stars, triangle) can be represented in our model.

An advantage of using the multivariate normal distribution is that the marginal distribution of a single random variable
does not depend on the correlated parameters of the covariance matrix and is also normal.
This leads to that the marginal probability distribution of any edge has a probit form.
Specifically, $P(a_{ij}=1)=\Phi(\mathbf{x}_{ij}^\top \beta )$, where $\Phi$ is the cumulative distribution function of
a standard normal distribution, $\beta$ is a vector of node parameters and $\mathbf{x}_{ij}$ is a vector with $i$th and $j$th elements $1$ and others $0$.
 and exactly a probit regression.
With this point, we call it \emph{Probit Network Model}.
Following the method developed in \cite{Yan:Qin:Wang:2015}, we show that the moment estimator of the node parameter is consistent.
To the best of our knowledge, this is the first time to derive consistency result in a single observed network with globally dependent structure.
This is sharply contrast with the existing results in dependent network models that some estimators such as the MLE is not consistent [\cite{Shalizi-Rinaldo2013}]
and the model is degeneracy [\cite{ChatterjeeDiaconis2013}].
We extend the model to admit node covariate information.

For the rest of the paper, we proceed as follows. In Section \ref{section:model}, we formally state the probit network model and
present the estimation of the model parameter.
In Section \ref{section:asymptotics}, we study the asymptotic property of the estimator.
In Section \ref{section:extension}, we extend the model to admit the node covariate information.
We make the summary and further discussion in Section \ref{section:discussion}.
All proofs are relegated into Appendix.

\section{Model and estimation}
\label{section:model}

Let $G_n$ be a simple undirected graph on $n\geq 2$ labelled nodes and
$A=(a_{ij})_{i,j=1}^n$ be its adjacency matrix.
Here, ``simple" means that $G_n$ does not have self-loops (i.e., $a_{ii}=0$) and
the edge between any two nodes is either present or absent.
The random variable $a_{ij}$ is equal to $1$ when there is an edge between $i$ and $j$; otherwise $a_{ij}=0$.
Let $d_i= \sum_{j \neq i} a_{ij}$ be the degree of node $i$
and $d=(d_1, \ldots, d_n)^\top$ be the degree sequence of the graph $G_n$.

\subsection{Model}
The probit network model is formulated as follows.
We assume that there exist a set of latent random variables $\{u_{ij}: i,j=1, \ldots, n; i<j\}$ and a set of unobserved node parameters $\{\alpha_i\}_{i=1, \ldots, n}$
such that an edge is present or absent according to the link surplus rule:
\[
a_{ij} = 1(\alpha_i + \alpha_j \ge u_{ij}),~~i<j,
\]
where $1()$ is an indicator function. Let $N=n(n-1)/2$.
Further, we assume that the $N$-dimensional vector $\mathbf{u}=(u_{12}, \ldots, u_{1n}, u_{23}, \ldots, u_{n-1, n})^\top$ with a lexicographical order
follows a multivariate normal distribution with mean $\mathbf{0}$ and covariance matrix $\Sigma$ whose diagonal elements are $1$'s
and row $i$ column $j$ nondiagonal  element is $\sigma_{ij}$ ($|\sigma_{ij}|<1$).
Since the marginal distribution for a single $u_{ij}$ is normal, we immediately have that the marginal distribution of $a_{ij}$ is
\[
\P(a_{ij}=1)=\Phi(\alpha_i + \alpha_j),
\]
where $\Phi$ is the cumulative distribution of the standard normal distribution.
It is noteworthy that
the use of the standard multivariate normal distribution causes no loss of generality compared with using an arbitrary mean and covariance matrix.
This is because adding a fixed amount to the mean can be compensated by adding the same amount from the intercept, and multiplying the standard deviation by a fixed amount can be compensated by multiplying the weight
in the additive term $(\alpha_i+\alpha_j)$ by the same amount.
The dependence among all edges is induced by the covariance structure of the normal vector $\mathbf{u}$.
Different structures of $\Sigma$ lead to different types of edge dependence of $G_n$.
An advantage of using the normal latent random variables is that the marginal distribution does not depend on the correlation coefficients of $\mathbf{u}$.

The dimension of the covariance matrix $\Sigma$ is very high, whose both numbers of rows and columns is in a magnitude of $n^2$.
Since $u_{ij}$ has double indices, it is not easy to see that which element of $\Sigma$ is $\mathrm{Cov}(u_{ij}, u_{kl})$.
So it is necessary to give the corresponding subscripts in $\sigma_{i_1, i_2}$ for the covariance $\mathrm{Cov}(u_{ij}, u_{kl})$ between $u_{ij}$ and $u_{kl}$.
Note that the subscripts in $u_{ij}$ satisfy $i<j$.
We assume that $i\le k$ since if $i>k$ we can interchange the positions of $u_{ij}$ and $u_{kl}$ in the expression $\mathrm{Cov}(u_{ij}, u_{kl})$.
We distinguish the first and second positions of $u_{ij}$ and $u_{kl}$ entering into the covariance formula
although the covariance is symmetric on its two variables.
If $i=k$, then we compare the second subscript $j$ and $l$. If $j\le l$, we use $\mathrm{Cov}(u_{ij}, u_{kl})$;
otherwise, we interchange $u_{ij}$ and $u_{kl}$ to ensure  $j \ge l$.
So we only consider the case $i\le k$ (if $i=k$, then $j\le l$).
By using a recursive method, one can show that the first subscript $\sigma_{i_1, i_2}$ for $\mathrm{Cov}(u_{ij}, u_{kl})$ is $[n-1+ \cdots + n - (i-1)]+j-i$ and the second is $[(n-i)+\cdots +n -k ]+l-k$.
We use the pair index $(ij)$ to represent the mathematical formula $[n-1+ \cdots + n - (i-1)]+j-i$. Using this notation we have, for
example, we write $\sigma_{(ij), (kl)}=\mathrm{Cov}(u_{ij}, u_{kl})$.

We say that a network has some subgraph property if the appearance probability of the subgraph is larger than the probability appearing in a random network.
Perfect subgraphs may be rare in real networks. As an illustration, perfect transitivity implies that, if $x$ is connected (through an edge) to $y$, and $y$ is connected to $z$, then $x$ is connected to $z$ as well.
In many social networks, the fact that $x$ knows $y$ and $y$ knows $z$ does not guarantee that $x$ knows $z$ as well, but makes it much more likely. Also the friend of my friend is not necessarily my friend, but is far more likely to be my friend than some randomly chosen member of the population. So in practice we are interested in subgraph property not the subgraph itself.
First, we look at the simplest subgraph structure--$2$-star that is a subset
of three nodes in which one node is connected by a link to each of the other two.
We consider a node set $\{i,j,k\}$. When $u_{ij}$ and $u_{ik}$ becomes smaller simultaneously, the edges $(i,j)$ and $(i,k)$ forms more likely
such that the star in this set with $i$ as the center node appears probably.
It leads to the condition that $\sigma_{(ij)(ik)}$ is positively correlated.
In other words, if a large proportion of correlation coefficients are positive, then such covariance structure induces the $2$-star property.
Next, we consider a little complex subgraph--triangle that is
a subset of three mutually connected nodes.
To represent the transitivity structure, we can make the restriction: the triad $(\sigma_{(ij)(jk)}, \sigma_{(jk)(ki)}, \sigma_{(ki)(jk)})$ is positive.
A more strong restriction is that other than having the same sign each value exceeds a threshold.
For other subgraphs, we can use a similar manner to construct the covariance matrix.

\subsection{Estimation}
If we don't make any restriction on the covariance matrix $\Sigma$, the total number of parameters is
$n+n(n-1)/2$ in the PNM since the number of node parameters is $n$ and the number of correlation parameters is $n(n-1)/2$.
Note that there are only $n(n-1)/2$ observed random variables $\{a_{ij}\}_{i<j}$ in the graph $\mathcal{G}_n$.
Therefore, the number of parameters is larger than the number of random variables such that
some parameters can not be estimated.
One method is to use the penalized likelihood method (e.g., $L_1$ penalized function) to obtain the sparse solution.
However, the optimization problem is involved with a very high-dimensional integrate such that the computation is not feasible.
Another method is reducing the number of correlation parameters. We use this method here and focus two cases for the covariance matrix.
The first one is simple to set all correlation coefficients sharing one common parameter $\sigma_0$ and letting
$\sigma_{(ij)(kl)} = \sigma_0^{|(ij)-(kl)|} (=\mathrm{Cov}(u_{ij}, u_{kl}))$. The second one is
\[
\sigma_{(ij)(kl)}= \sigma_0 + \sigma_i + \sigma_j + \sigma_k + \sigma_l,
\]
with the restriction $\sum_i \sigma_i=0$, where $\sigma_0$ is the intercept term.

To estimate the model parameters, we utilize a two-stage method.
First, we use the moment equations based on the node degrees to estimate the node parameters $\{\alpha_i\}_{i=1}^n$:
\begin{equation}\label{eq:moment}
d_i = \sum_{j\neq i}^n \Phi( \alpha_i + \alpha_j ), ~~ i=1, \ldots, n.
\end{equation}
The solution to the above equation (i.e., the moment estimator) is denoted by $\widehat{\alpha}$.
In the second stage, we estimate the correlation parameters also based on the moment equations by combining the random variables $\{a_{ij}\}_{i<j}$, where
the estimates $\widehat{\alpha}$ from the first stage are plugged.
For example, in the first case that all $\sigma_{ij}$'s share one common parameter $\sigma_0$, we can use the moment equation:
\[
\sum_{i<j, k<l} [a_{ij}a_{kl} - \E (a_{ij}a_{kl}) ] = 0,
\]
where $\E [a_{ij}a_{kl}]$ is the probability of both $a_{ij}$ and $a_{kl}$ equalling to $1$:
\[
\begin{array}{lcl}
\P(a_{ij}=1, a_{ji}=1) & = & \int_{-\infty}^{\alpha_i+\alpha_j} \int_{-\infty}^{\alpha_k+\alpha_l} \phi(t_1, t_2, \sigma_0) dt_2 dt_1.
\end{array}
\]
In the above equation, $\phi$ is the standardized bivariate density function:
\[
\phi(t_1, t_2, \sigma_0) = \frac{1}{ 2\pi \sqrt{1-\sigma_0^2} } \exp \left (  -\frac{ t_1^2 +t_2^2 - 2\sigma_0 t_1 t_2}{ 2(1-\sigma_0^2) } \right).
\]
In the other case, we can use the $n$ moment equations:
\[
\sum_{j\neq l} [a_{ij}a_{il} - \E (a_{ij}a_{il}) ] = 0, ~~i=1, \ldots, n.
\]

\section{Asymptotical properties}
\label{section:asymptotics}
In this section, we establish the consistency of the moment estimator for the node parameters.
We first give some notations. For a subset $C\subset \R^n$, let $C^0$ and $\overline{C}$ denote the interior and closure of $C$, respectively.  For a vector $x=(x_1, \ldots, x_n)^\top\in R^n$, denote by
$\|x\|_\infty = \max_{1\le i\le n} |x_i|$, the $\ell_\infty$-norm of $x$.  For an $n\times n$ matrix $J=(J_{i,j})$, let $\|J\|_\infty$ denote the matrix norm induced by the $\ell_\infty$-norm on vectors in $\R^n$, i.e.
\[
\|J\|_\infty = \max_{x\neq 0} \frac{ \|Jx\|_\infty }{\|x\|_\infty}
=\max_{1\le i\le n}\sum_{j=1}^n |J_{i,j}|.
\]

Note that $\E(a_{ij})$ only depends on the sum $\alpha_i + \alpha_j$ and $\Phi(x)$ is the cumulative distribution function of the standard
normal random variable.
Define a system of functions:
\begin{equation}\label{eq:F-definition}
\begin{array}{rcl}
F_i(\bs{\alpha}) & = & d_i - \E(d_i) =d_i - \sum_{j=1; j\neq i}^n \Phi(\alpha_i + \alpha_j),~~i=1,\ldots, n, \\
F(\bs{\alpha}) & = & (F_1(\bs{\alpha}), \ldots, F_n(\bs{\alpha}))^\top. \\
\end{array}
\end{equation}
It is clear that the solution to $F(\bs{\alpha})=0$ is the moment estimator $\bs{\widehat{\alpha}}$ of $\bs{\alpha}$.
The idea to establish the consistency result is to obtain a geometrically fast convergence of rate for the Newton iterative sequence constructed from
$F(\bs{\alpha})=0$. This method is used in \cite{Yan:Qin:Wang:2015}.
One key point is to derive the upper bound of the maximum value of the centered $d_i$'s.
Generally, it is difficult to handle the dependent random variables. Since $\mathbf{u}$ is a normal random vector,
there are two special cases for the covariance matrix $\Sigma$ in which $\mathbf{u}$ exhibits a nice dependent structure.
That is, when all $\sigma_{ij}>0$, $\mathbf{u}$ is positively dependent random variables [\cite{pitt-1982}];
when all $\sigma_{ij}<0$,  $\mathbf{u}$ is negatively dependent random variables [\cite{joag-dev1983}].
Under these two cases, there exist exponential probability inequalities for the sequence $\{a_{ij}\}_{j=1, j\neq i}^n$ for each fixed $i$
that is used to derive the upper bound of the maximum value of the centered $d_i$'s.
Formally, we present the consistency of the moment estimator $\widehat{\alpha}$ as follows.

\begin{theorem}\label{Theorem-con}
(1) Assume that (i) $\sigma_{ij}\le 0$ for all $i\neq j$ or (ii)  $\sigma_{ij}\ge 0$ for all $i\neq j$ and
$\max_{i\neq j \neq k} \sigma_{(ij)(ik)} \le  \exp( -\frac{8}{3}(r\log n)^{1/2})=O( e^{-n^{1/2}})$, where $r=O(n/\log n)$.
Let $\alpha^*$ denote the true parameter.
If $\| \alpha^* \|\le Q_n$ and $e^{3Q_n^2}=o(n^{1/2})$, then we have
\begin{equation}\label{Newton-convergence-rate}
\| \widehat{\alpha} - \alpha^* \|_\infty = O_p( n^{-1/2}e^{3Q_n^2})=o_p(1).
\end{equation}
\end{theorem}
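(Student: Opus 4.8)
The plan is to follow the Newton-iteration strategy of \cite{Yan:Qin:Wang:2015}. Since $\widehat{\alpha}$ solves $F(\alpha)=0$, where $F$ is the map defined in \eqref{eq:F-definition}, I would study the iterates
\[
\alpha^{(t+1)} = \alpha^{(t)} - [F'(\alpha^{(t)})]^{-1} F(\alpha^{(t)}), \qquad \alpha^{(0)}=\alpha^*,
\]
and show they converge geometrically to a root lying in a small $\ell_\infty$-ball around $\alpha^*$. The radius of that ball is governed by two quantities: the norm of the (approximate) inverse Jacobian, and the size $\|F(\alpha^*)\|_\infty = \max_i |d_i - \E(d_i)|$ of the system evaluated at the truth. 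The whole argument reduces to bounding these two quantities and checking that the quadratic Newton remainder is negligible under the hypothesis $e^{3Q_n^2}=o(n^{1/2})$.

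First I would handle the \emph{analytic} ingredient, the Jacobian. Differentiating \eqref{eq:F-definition} gives $\partial F_i/\partial\alpha_i = -\sum_{j\neq i}\phi(\alpha_i+\alpha_j)$ and $\partial F_i/\partial\alpha_j=-\phi(\alpha_i+\alpha_j)$ for $j\neq i$, where $\phi=\Phi'$ is the standard normal density. Thus $-F'(\alpha)$ is a symmetric, diagonally balanced matrix $V$ with off-diagonal entries $v_{ij}=\phi(\alpha_i+\alpha_j)$ and diagonal $v_{ii}=\sum_{j\neq i}v_{ij}$. On the region $\|\alpha\|_\infty\le Q_n$ one has $|\alpha_i+\alpha_j|\le 2Q_n$, so
\[
m:=\frac{1}{\sqrt{2\pi}}\,e^{-2Q_n^2}\le v_{ij}\le \frac{1}{\sqrt{2\pi}}=:M ,
\]
placing $V$ in the matrix class $\mathcal{L}_n(m,M)$ analysed in \cite{Yan:Qin:Wang:2015}. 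I would invoke their approximate-inverse lemma, which supplies an explicit $S$ (a diagonal matrix plus a constant-entry correction) with $\|V^{-1}-S\|_\infty = O(n^{-2}e^{6Q_n^2})$ and hence $\|V^{-1}\|_\infty \le \|S\|_\infty + \|V^{-1}-S\|_\infty = O(n^{-1}e^{2Q_n^2})$. A companion bound on the Lipschitz modulus of $F'$, coming from the boundedness of $\phi'$, then controls the quadratic term in the Newton step.

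The \emph{probabilistic} ingredient, and the main obstacle, is bounding $\max_i|d_i-\E(d_i)|$ in the presence of global edge dependence. Here I would exploit the sign restrictions on $\Sigma$: by \cite{pitt-1982} the Gaussian vector $\mathbf{u}$ is positively associated when all $\sigma_{ij}\ge 0$, and by \cite{joag-dev1983} negatively associated when all $\sigma_{ij}\le 0$. Because each $a_{ij}=1(u_{ij}\le\alpha_i+\alpha_j)$ is a monotone function of the single coordinate $u_{ij}$, for every fixed $i$ the family $\{a_{ij}\}_{j\neq i}$ inherits the corresponding association, so $d_i$ is a sum of $n-1$ associated Bernoulli variables. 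Applying the exponential inequalities available for negatively or positively associated sequences together with a union bound over the $n$ nodes would yield $\max_i|d_i-\E(d_i)| = O_p((n\log n)^{1/2})$; in the positively associated case the extra requirement $\max_{i\neq j\neq k}\sigma_{(ij)(ik)}\le e^{-\frac{8}{3}(r\log n)^{1/2}}$ is exactly what keeps the positive covariance contributions to $\mathrm{Var}(d_i)$ of order $n$, so that the inequality still applies. This step is the crux, since it is where arbitrary dependence is tamed.

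Finally I would combine the pieces. At $\alpha^{(0)}=\alpha^*$,
\[
\|\alpha^{(1)}-\alpha^*\|_\infty \le \|[F'(\alpha^*)]^{-1}\|_\infty\,\|F(\alpha^*)\|_\infty = O_p\!\big(n^{-1}e^{2Q_n^2}\cdot (n\log n)^{1/2}\big)=O_p\!\big(n^{-1/2}e^{3Q_n^2}\big),
\]
the logarithmic factor being absorbed into a factor $e^{Q_n^2}$. Using the Lipschitz bound on $F'$ together with $e^{3Q_n^2}=o(n^{1/2})$, I would verify that the Newton map is a contraction on the ball of radius $C\,n^{-1/2}e^{3Q_n^2}$ about $\alpha^*$, so that the iterates remain in the ball, converge geometrically, and their limit is the unique root $\widehat{\alpha}$ there. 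This simultaneously establishes existence of the moment estimator near $\alpha^*$ and the bound \eqref{Newton-convergence-rate}; the hypothesis $e^{3Q_n^2}=o(n^{1/2})$ makes the right-hand side $o_p(1)$.
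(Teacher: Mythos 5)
Your proposal follows the same route as the paper's proof: Newton's method started at the truth $\alpha^*$ (the paper invokes the Kantorovich-type theorem of Gragg and Tapia rather than arguing contraction directly, but that is cosmetic), the approximate-inverse bound for the diagonally balanced Jacobian from \cite{Yan:Zhao:Qin:2015}, the Lipschitz bound on $F'$ from the boundedness of $\phi'$, and --- the crux, exactly as you identify it --- exponential inequalities for $\max_i|d_i-\E d_i|$ obtained from negative association (Joag-Dev and Proschan, via Roussas's NA Hoeffding inequality) in case (i) and positive association (Pitt, via the Ioannides--Roussas inequality, which is where the covariance-decay hypothesis enters) in case (ii).

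One bookkeeping issue is worth flagging. You read the hypothesis as $\|\alpha^*\|_\infty\le Q_n$, so $|\alpha_i+\alpha_j|\le 2Q_n$ and $m=c\,e^{-2Q_n^2}$; the paper instead works on the region $|\alpha_i+\alpha_j|\le Q_n$, giving $m_n=c\,e^{-Q_n^2/2}$. More importantly, with your $m$ the correction term dominates: $\|V^{-1}-S\|_\infty$ in the induced $\ell_\infty$ operator norm is $n$ times the entrywise bound, i.e. $O(n^{-1}e^{6Q_n^2})$, so $\|V^{-1}\|_\infty=O(n^{-1}e^{6Q_n^2})$ rather than your stated $O(n^{-1}e^{2Q_n^2})$, and the rate you derive would be $n^{-1/2}e^{6Q_n^2}$ up to logs, which does not match \eqref{Newton-convergence-rate} under the stated condition $e^{3Q_n^2}=o(n^{1/2})$. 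Under the paper's convention one gets $M_n^2/(nm_n^3)=O(n^{-1}e^{3Q_n^2/2})$ and hence $O_p(n^{-1/2}e^{3Q_n^2/2}\log n)$, which the paper absorbs into $n^{-1/2}e^{3Q_n^2}$. Also, in the positively associated case the Ioannides--Roussas inequality delivers $\max_i|d_i-\E d_i|=O_p(n^{1/2}\log n)$, slightly worse than the $(n\log n)^{1/2}$ you claim for both cases; this does not affect the final conclusion but your heuristic about variance contributions is not the actual mechanism (the condition on $\max\sigma_{(ij)(ik)}$ is used to verify the covariance-decay hypothesis $C(p)\le\exp[-\tfrac{4(M+1)}{3M}(\tfrac{\alpha}{2})^{1/2}(r\log n)^{1/2}]$ of that blocking inequality).
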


\section{Extensions}
\label{section:extension}
The extension of the PNM to directed networks is trivial.
The only difference is that for each node $i$ there is one out-degree parameter $\alpha_i$ and one in-degree parameter $\beta_j$
and the latent variable $\mathbf{u}$ is $n(n-1)$ dimensional.
So the link surplus rule is $a_{ij}=1(\alpha_i + \beta_j \ge u_{ij} )$.

The link surplus rule is also easily extended into allow the covariate information of nodes.
Assume that we observe a vector $Z_{ij}$, the covariate information attached to the edge between nodes $i$ and $j$.
The edge covariate can be constructed according to exogenous information such as node attributes.
Given the attributes $X_i$ and $X_j$ of nodes $i$ and $j$,
the covariate $Z_{ij}$ can be formed according to the similarity or dissimilarity between node attributes $X_i$ and $X_j$.
Specifically, $Z_{ij}$ can be represented through a symmetric function $g(\cdot, \cdot)$ with $Z_i$ and $Z_j$ as its arguments. As an example
if $X_{i1}$ and $X_{i2}$ are location coordinates, then $Z_{ij} =[(X_{i1}- X_{j1})^2 + (X_{i2} - X_{j2})^2]^{1/2}$, denoting the Euclidean distance
between $i$ and $j$. So the link surplus rule is
\[
a_{ij}=1(Z_{ij}^\top \gamma + \alpha_i + \alpha_j \ge u_{ij} ).
\]
Such rule captures the homophily--the tendency of individuals to associate and bond with similar others.
\cite{Graham2017} also use the type of the above rule with the assumption of independent edges and $u_{ij}$ as the logistic distribution.
\cite{Dzemski2017} study the directed version of the above link rule with the assumption of independent dyad edges using a scalar parameter to characterize the correlation of dyads
in which a two-step approach was used for estimation and the focus is on the homophily parameter.

\section{Discussion}
\label{section:discussion}
We have proposed the probit network model to admit the non-independence of edges.
The structure of dependence with graphs are determined by the covariance of the normal latent random variables.
The proposed model addresses the shortcoming of the dyad independent model and inherits its nice asymptotic properties.
However, many open problems remain, including the issue of the asymptotic distribution of the moment estimators for node parameters as well as
the asymptotic behaviors of the estimator for the correlation parameters.
Despite these challenges, we believe that the probit network model provides a good starting point to analyze complex network data.

\section{Appendix}
\subsection{Preliminaries}

We first give the definition of the terminologies ``positively associated" [\cite{esary1967}] (in the original terminology, associated) and ``negatively associated" [\cite{joag-dev1983,block1982,Ghosh1981}].
A finite family of random variables $\{X_i, 1 \le i \le n\}$ is said to be positively associated (PA) if for
every pair of disjoint subsets $A$ and $B$ of $\{1, 2, \ldots, n\}$,
\[
\mathrm{Cov}[ f(X_i, i \in A), g(X_j, j \in B) ] \ge 0,
\]
whenever $f$ and $g$ are coordinatewise increasing and the covariance exists. An infinite family is
positively associated if every finite subfamily is positively associated.
Analogously, if
\[
\mathrm{Cov}[ f(X_i, i \in A), g(X_j, j \in B) ] \le 0,
\]
then $\{X_i, 1 \le i \le n\}$ is said to be negatively associated (NA).
An infinite family is
negatively associated if every finite subfamily is negatively associated.
Negative/Positive association has one nice property that
nondecreasing functions of disjoint sets of NA/PA random variables are also NA/PA [\cite{esary1967,joag-dev1983}].
This type of closure property does not hold for some other types of negative dependence defined in \cite{lehmann1966} (see \cite{joag-dev1983}).
For a set of normal random variables, \cite{pitt-1982} and \cite{joag-dev-pitt1983} establish the necessary and sufficient condition for them to be PA;
\cite{joag-dev1983} derive the corresponding condition for them to be NA. For clarify, we restate them below.

\begin{theorem}\label{theorem-PANA}
Let $X=(X_1, \ldots, X_n)$ be multivariate normal with mean vector $0$ and covariance matrix
$\Sigma=(\sigma_{ij}=\mathrm{Cov}[X_i, X_j])$. \\
(i)(\cite{pitt-1982,joag-dev-pitt1983}) The condition
$\sigma_{ij}\ge 0$ for $1\le i,j \le k$ is necessary and sufficient for the random vector $X$
to be positively associated. \\
(ii)(\cite{joag-dev1983}) The condition
$\sigma_{ij}\le 0$ for $1\le i,j \le k$ is necessary and sufficient for  the random vector $X$
to be negatively associated.
\end{theorem}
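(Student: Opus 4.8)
The two parts are mirror images, and in each the ``necessity'' direction is immediate: if $X$ is positively (resp.\ negatively) associated, then taking the disjoint singletons $A=\{i\}$, $B=\{j\}$ with $i\neq j$ and the coordinatewise increasing functions $f(x)=x_i$, $g(x)=x_j$, the defining inequality reads $\mathrm{Cov}(X_i,X_j)=\sigma_{ij}\ge 0$ (resp.\ $\le 0$); the diagonal entries are variances and are automatically nonnegative. Thus the entire content lies in the ``sufficiency'' direction, and the plan is to prove both (i) and (ii) simultaneously from a single Gaussian interpolation identity, recovering the results of \cite{pitt-1982,joag-dev-pitt1983} and \cite{joag-dev1983}.

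First I would reduce to the case where $f$ and $g$ are bounded, smooth, and coordinatewise increasing with bounded derivatives: a general coordinatewise increasing function can be truncated and then mollified by convolution with a nonnegative product kernel, an operation that preserves coordinatewise monotonicity, and the covariance inequalities pass to the limit by dominated convergence. For such $f,g$ the key tool is the Gaussian covariance identity
\[
\mathrm{Cov}\big(f(X),g(X)\big)=\int_0^1 \frac{1}{2\sqrt{1-t}}\sum_{m,j}\sigma_{mj}\,\E\!\left[\,\partial_m f(X)\,\partial_j g(X_t)\,\right]dt,
\]
where $\partial_m=\partial/\partial x_m$, $X'$ is an independent copy of $X$, and $X_t=\sqrt{1-t}\,X+\sqrt{t}\,X'$. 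To derive it I would set $\phi(t)=\E[f(X)g(X_t)]$, observe that $\phi(0)=\E[f(X)g(X)]$ while $\phi(1)=\E[f(X)]\,\E[g(X)]$ by independence of $X$ and $X'$, so that $\mathrm{Cov}(f(X),g(X))=-\int_0^1\phi'(t)\,dt$. Differentiating under the expectation and applying Gaussian integration by parts to the centered jointly normal pair $(X,X')$ (using $\mathrm{Cov}(X_i,X'_j)=0$ and $\mathrm{Cov}(X_i,X_j)=\sigma_{ij}$) makes the second-order terms cancel and leaves exactly the stated integrand; the weight $\tfrac{1}{2\sqrt{1-t}}$ is nonnegative and integrates to one on $[0,1]$.

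With this identity the conclusion is a sign count. Since $f$ and $g$ are coordinatewise increasing we have $\partial_m f\ge 0$ and $\partial_j g\ge 0$ pointwise, hence $\E[\partial_m f(X)\,\partial_j g(X_t)]\ge 0$ for every pair $(m,j)$. For disjoint $A,B$ the functions satisfy $\partial_m f\equiv 0$ for $m\notin A$ and $\partial_j g\equiv 0$ for $j\notin B$, so only indices with $m\in A$ and $j\in B$ survive; in particular $m\neq j$. In case (i) each surviving $\sigma_{mj}\ge 0$, so every summand is nonnegative and $\mathrm{Cov}(f(X),g(X))\ge 0$, giving positive association. In case (ii) each surviving $\sigma_{mj}\le 0$, so every summand is nonpositive and $\mathrm{Cov}(f(X),g(X))\le 0$, giving negative association.

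The main obstacle is not the sign count but the analytic justification: ensuring that the differentiation under the expectation and the integration by parts are legitimate (finiteness of the relevant moments and vanishing boundary terms), which is precisely what the truncation/mollification reduction secures, together with verifying that the inequalities survive the limit. A conceptual point worth emphasizing is that disjointness of $A$ and $B$ is essential in the negative case: it is exactly what removes the diagonal contributions $\sigma_{mm}=\mathrm{Var}(X_m)>0$, whose positive sign would otherwise spoil the argument, which is also why negative association is only defined through disjoint index sets.
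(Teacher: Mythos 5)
Your proof is correct, but you should know that the paper itself gives no proof of this statement: Theorem \ref{theorem-PANA} is a restatement of known results, quoted with citations (\cite{pitt-1982,joag-dev-pitt1983} for part (i), \cite{joag-dev1983} for part (ii)) and used purely as a preliminary tool in the proof of Lemma \ref{lemma-theorem-con}. What you have written is essentially a reconstruction of the argument in those cited sources: the Gaussian interpolation identity with the pair $(X, X_t)$, $X_t=\sqrt{1-t}\,X+\sqrt{t}\,X'$, is exactly Pitt's device, your cancellation of the second-order terms under Stein/Price integration by parts is the right computation (the self-covariance of $X_t$ is constant in $t$, so only the cross-covariance $\sqrt{1-t}\,\sigma_{ij}$ contributes), and the weight $\tfrac{1}{2\sqrt{1-t}}$ does integrate to $1$ on $[0,1]$. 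The sign count is also right, and you correctly isolate the one conceptual asymmetry between the two parts: disjointness of $A$ and $B$ is what eliminates the diagonal terms $\sigma_{mm}=\mathrm{Var}(X_m)\ge 0$ in the negative case, which is also why the paper's phrasing ``$\sigma_{ij}\le 0$ for $1\le i,j\le k$'' must be read as a condition on off-diagonal entries only, as you implicitly do. Two minor points for full rigor: the identity as derived presupposes a nondegenerate $\Sigma$ (or at least enough regularity to differentiate under the expectation), which can be handled by replacing $\Sigma$ with $\Sigma+\epsilon I$ and letting $\epsilon\downarrow 0$; and in the necessity step the coordinate functions $f(x)=x_i$, $g(x)=x_j$ are unbounded, which is admissible here because the paper's definition only requires the covariance to exist, as it does for Gaussians.
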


Let $S_n=\sum_{i=1}^n X_i$. Exponential bounds for the probabilities $P(|S_n| \ge \epsilon)$ ($\epsilon>0$)
play an important role for the purpose of providing rates of convergence
for estimates of various quantities. A well-known exponential bounds is the \citeauthor{Hoeffding:1963} inequality for independent bounded
random variables. As pointed by \cite{Roussas1996}, the proof of the Hoeffding inequality goes through almost
unchanged, with one small change with one equality is replaced by the inequality that automatically holds for NA random variables.
We state it below.

\begin{theorem}(Proposition 3.1 in \cite{Roussas1996})\label{theorem:hoeffing:NA}
Let $X_1, \ldots, X_n$ be negatively associated random variables such that $a_i \le X_i \le b_i$, $i=1, \ldots, n$.
Set $S_n = \sum_{i=1}^n X_i$. Then, for every $t>0$,
\[
\P( | S_n - \E S_n | \ge t ) \le 2 \exp [ -2t^2/\sum_{i=1}^n (b_i-a_i)^2]
\]
\end{theorem}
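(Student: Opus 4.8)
The plan is to run the classical Chernoff-bound proof of Hoeffding's inequality and to locate the unique place where the independence of the $X_i$ enters, namely the factorization of the moment generating function of $S_n$; under negative association that equality weakens to an inequality, but crucially in the direction that preserves the argument. Fix $s>0$ and apply Markov's inequality to the nonnegative variable $e^{s(S_n-\E S_n)}$:
\[
\P(S_n-\E S_n\ge t)\le e^{-st}\,\E\!\left[e^{s(S_n-\E S_n)}\right]
=e^{-st}\prod_{i=1}^n e^{-s\E X_i}\,\E\!\left[e^{sS_n}\right],
\]
so that everything reduces to controlling $\E[e^{sS_n}]=\E[\prod_{i=1}^n e^{sX_i}]$.

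The key step---and the one requiring care---is the inequality
\[
\E\!\left[\prod_{i=1}^n e^{sX_i}\right]\le \prod_{i=1}^n \E\!\left[e^{sX_i}\right],\qquad s>0.
\]
I would prove this by induction on $n$, peeling off one coordinate at a time. For $s>0$ each factor $e^{sX_i}$ is nonnegative and nondecreasing in $X_i$, so the partial product $\prod_{i=1}^{n-1}e^{sX_i}$ is a coordinatewise nondecreasing (and nonnegative) function of the block $(X_1,\dots,X_{n-1})$, while $e^{sX_n}$ is a nondecreasing function of the disjoint block $\{X_n\}$. The defining covariance inequality of negative association, applied to these two disjoint blocks, gives $\E[(\prod_{i=1}^{n-1}e^{sX_i})e^{sX_n}]\le \E[\prod_{i=1}^{n-1}e^{sX_i}]\,\E[e^{sX_n}]$; the induction hypothesis (valid since any subfamily of negatively associated variables is again negatively associated) then finishes the factorization. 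Boundedness of the $X_i$ guarantees all expectations are finite, so the covariances are well defined. This is exactly the ``equality replaced by inequality'' that the preceding discussion refers to.

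With the factorization in hand the remainder is verbatim Hoeffding. By Hoeffding's lemma, any random variable $Y$ with $a\le Y\le b$ and $\E Y=0$ satisfies $\E[e^{sY}]\le e^{s^2(b-a)^2/8}$; applying this to $Y=X_i-\E X_i$ yields $\E[e^{s(X_i-\E X_i)}]\le e^{s^2(b_i-a_i)^2/8}$, and combining with the two displays above gives
\[
\P(S_n-\E S_n\ge t)\le \exp\!\Big(-st+\tfrac{s^2}{8}\textstyle\sum_{i=1}^n(b_i-a_i)^2\Big).
\]
Optimizing the exponent over $s>0$ at $s=4t/\sum_i(b_i-a_i)^2$ produces the bound $\exp(-2t^2/\sum_i(b_i-a_i)^2)$. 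For the lower tail I would note that $-X_1,\dots,-X_n$ are again negatively associated---the covariance inequality is invariant under negating all variables, since a nondecreasing function of $-X_i$ is a nonincreasing function of $X_i$ and the covariance is unchanged when both arguments change sign---and that they lie in $[-b_i,-a_i]$, a window of the same width $b_i-a_i$; the upper-tail bound applied to them controls $\P(S_n-\E S_n\le -t)$ by the identical quantity. A union bound over the two tails yields the factor $2$ and completes the proof. The only genuine obstacle is the product inequality of the second paragraph, i.e.\ deducing the sub-factorization of the moment generating function from the two-block definition of negative association; everything else is the standard Hoeffding computation.
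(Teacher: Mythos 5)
Your proof is correct, and it is exactly the argument the paper has in mind: the paper does not prove this statement itself but cites it as Proposition 3.1 of \cite{Roussas1996}, remarking only that ``the proof of the Hoeffding inequality goes through almost unchanged, with one equality replaced by the inequality that automatically holds for NA random variables''---which is precisely the sub-factorization $\E[\prod_i e^{sX_i}]\le\prod_i\E[e^{sX_i}]$ that you identify and prove by induction on disjoint blocks. Your write-up simply fills in the details (including the correct handling of the lower tail via negation preserving negative association) of the same standard Chernoff--Hoeffding route.
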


For PA random variables, \cite{Ioannides1999423} prove that under some additional conditions the Hoeffding inequality hold.

\begin{theorem}(\cite{Ioannides1999423})\label{theorem:hoeffing:PA}
Assume that $X_1, \ldots, X_n$ are PA and bounded, $|X_i| \le M/2$, $i\ge 1$.
For positive integers
$1\le p = p(n) <n$ and $p\to\infty$, divide the set $\{1, 2, \ldots, n\}$ into successive groups each containing
$p$ elements. Let $r=r(n)$ be the largest integer with:
$0<r<n$, $r\to\infty$ and $2pr \le n$.
Let $C(k)=\sup \{ \mathrm{Cov}(X_i, X_{i+k}): i \ge 1\}$, $k\ge 1$, and assume that $C(k)$ is nonincreasing as $k\to\infty$. Let $\alpha$ be an arbitrary constant $>1$.
Let $S_n=\sum_{i=1}^n (X_i - \E X_i)$ and $\epsilon_n=(\alpha M^2/2)^{1/2} ( \log n / r)^{1/2}$.
If
\begin{equation}\label{inequality-cp}
C(p) \le \exp[ - \frac{ 4(M+1) }{3M} (\frac{\alpha}{2})^{1/2} ( r\log n)^{1/2} ]
\end{equation}
holds, then for sufficiently large $n$
\[
\P( |\frac{1}{n} S_n| \ge \epsilon_n ) \le C_0 \exp( -\frac{2r\epsilon_n^2}{9M^2}  ),
\]
where $C_0$ is a constant (for example, $C_0=12$)
\end{theorem}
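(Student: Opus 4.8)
The plan is to follow the Bernstein blocking method adapted to positively associated (PA) variables, as in \cite{Roussas1996} and \cite{Ioannides1999423}, reducing everything to a Chernoff bound in which the only genuinely new ingredient — relative to the independent or negatively associated case — is the control of the exponential moment of a sum of PA block sums. First I would carry out the block decomposition described in the statement: partition $\{1,\dots,n\}$ into $2r$ consecutive blocks of length $p$ together with a leftover set of fewer than $p$ indices, and collect the block sums into an odd-indexed family $T_1,\dots,T_r$ and an even-indexed family $T_1',\dots,T_r'$, so that any two block sums of the same parity are built from index sets separated by at least $p$ positions. Writing $S_n=\sum_i T_i+\sum_i T_i'+R$ with $R$ the leftover sum, I would bound $|R|\le pM/2$ deterministically; since $n\epsilon_n$ is of order $pr\epsilon_n$ while $pM/(pr\epsilon_n)=M/(r\epsilon_n)\to 0$ (because $r\epsilon_n=(\alpha M^2/2)^{1/2}(r\log n)^{1/2}\to\infty$), the event $\{|S_n/n|\ge\epsilon_n\}$ forces one of the two parity sums to exceed roughly $n\epsilon_n/2$ in absolute value. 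A union bound then reduces the claim to four tail estimates of the form $\P(\pm\sum_{i=1}^r T_i\ge n\epsilon_n/2)$.

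For a single such tail I would apply the exponential Markov inequality with a parameter $t>0$, giving $\P(\sum_i T_i\ge n\epsilon_n/2)\le e^{-tn\epsilon_n/2}\,\E\exp(t\sum_i T_i)$. The heart of the argument is to decouple $\E\exp(t\sum_i T_i)$ into the product $\prod_i\E\exp(tT_i)$ up to a controllable correction. Because each $T_i$ is a coordinatewise nondecreasing function of a block disjoint from the others, the family $T_1,\dots,T_r$ is PA by the closure property recalled in the Appendix, so for PA variables the product is a lower bound for the joint exponential moment and the correction has a definite sign. To bound it from above I would telescope, at each step splitting off one block and writing $\E[e^{tT_k}e^{t\sum_{j>k}T_j}]=\E e^{tT_k}\,\E e^{t\sum_{j>k}T_j}+\mathrm{Cov}(e^{tT_k},e^{t\sum_{j>k}T_j})$, and estimate each covariance by the covariance inequality for associated variables $\mathrm{Cov}(f(X),g(Y))\le \sup|f'|\,\sup|g'|\,\mathrm{Cov}(X,Y)$ with $f=g=\exp(t\,\cdot)$. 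On the ranges involved this produces a factor $e^{t(pM/2+\text{tail range})}$ up to $e^{trpM/2}$, multiplying $\mathrm{Cov}(T_k,\sum_{j>k}T_j)$. The same-parity separation by at least $p$ positions together with the monotonicity of $C(\cdot)$ gives $\mathrm{Cov}(T_i,T_j)\le p^2C(p)$ for each pair, hence $\sum_{i<j}\mathrm{Cov}(T_i,T_j)\le \frac12 r^2 p^2 C(p)\le \frac18 n^2 C(p)$ using $2pr\le n$.

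It remains to combine the pieces and fix $t$. Hoeffding's lemma applied to each centered block sum, whose range is at most $pM$, gives $\E\exp(tT_i)\le\exp(t^2p^2M^2/8)$, so the product contributes $\exp(rt^2p^2M^2/8)$ and the Chernoff main exponent is $rt^2p^2M^2/8-tn\epsilon_n/2$. Taking $t=4\epsilon_n/(3pM^2)$ makes this main exponent at most $-2r\epsilon_n^2/(9M^2)$ (the factor $9$ comes from the $3$ in $t$ and the $8$ in Hoeffding), while the PA decoupling correction is governed by the product of a polynomial-in-$(r,\log n)$ prefactor with $e^{trpM/2}C(p)$. Since $trpM/2=\frac23(\alpha/2)^{1/2}(r\log n)^{1/2}$ for this choice of $t$, and the hypothesis caps $C(p)$ by $\exp[-\frac{4(M+1)}{3M}(\alpha/2)^{1/2}(r\log n)^{1/2}]$ with $\frac{4(M+1)}{3M}>\frac23$, the product $e^{trpM/2}C(p)$ decays exponentially in $(r\log n)^{1/2}$ and the whole correction stays below, say, $\log 3$. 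Each of the four tails is therefore at most $3\exp(-2r\epsilon_n^2/(9M^2))$, and summing yields the stated bound with $C_0=2\cdot 2\cdot 3=12$. The main obstacle is exactly this decoupling step: unlike the negatively associated case (Theorem \ref{theorem:hoeffing:NA}), where the exponential moment of the sum is dominated by the product and Hoeffding's proof transfers verbatim, for PA variables the inequality runs the wrong way and the telescoped correction carries a factor growing like $e^{trpM/2}$; the argument succeeds only because $t$ is chosen small while the exponentially small bound assumed on $C(p)$ is calibrated precisely to annihilate that factor.
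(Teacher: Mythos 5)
The paper does not prove this statement at all: it is imported verbatim as a preliminary (Theorem~\ref{theorem:hoeffing:PA}) from \cite{Ioannides1999423}, so there is no in-paper proof to compare against. Your reconstruction follows the same strategy as that original source -- Bernstein blocking into $2r$ same-parity blocks of length $p$, a Chernoff bound per tail, decoupling of the joint exponential moment via the covariance inequality $\mathrm{Cov}(f(X),g(Y))\le \sup|f'|\,\sup|g'|\,\mathrm{Cov}(X,Y)$ for associated variables, and a choice of $t$ calibrated so that the hypothesis on $C(p)$ annihilates the $e^{trpM}$ blow-up -- and you correctly identify the one genuinely PA-specific difficulty, namely that the product of block MGFs bounds the joint MGF from \emph{below}, so the correction must be controlled explicitly rather than discarded as in the NA case (Theorem~\ref{theorem:hoeffing:NA}). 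Two bookkeeping slips are worth flagging, though neither breaks the argument. First, the centered block sum $T_i=\sum_{j\in \text{block}}(X_j-\E X_j)$ has range $2pM$, not $pM$ (each summand lies in $[-M,M]$), so Hoeffding's lemma gives $\E e^{tT_i}\le e^{t^2p^2M^2/2}$; with your $t=4\epsilon_n/(3pM^2)$ and $n\epsilon_n/2\ge pr\epsilon_n$ the main exponent is still at most $-4r\epsilon_n^2/(9M^2)\le -2r\epsilon_n^2/(9M^2)$, so the stated bound survives with room to spare. Second, since $r$ is the largest integer with $2pr\le n$, the leftover set can have up to $2p-1$ indices, not fewer than $p$; it remains deterministically negligible for the same reason you give. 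With the corrected constants, $trpM=\tfrac43(\alpha/2)^{1/2}(r\log n)^{1/2}$, and the hypothesis coefficient $\tfrac{4(M+1)}{3M}=\tfrac43+\tfrac{4}{3M}$ splits exactly into the part cancelling $e^{trpM}$ and a surplus $\tfrac{4}{3M}$ that kills the polynomial prefactor -- which is precisely the calibration you describe, so the architecture and the conclusion are sound.
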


Other than the above preliminary results, we need two more results in order to start the proof.
Given $m, M>0$, we say an $n\times n$ matrix $V=(v_{ij})$ belongs to the matrix class $\mathcal{L}_{n}(m, M)$ if
$V$ is a diagonally balanced matrix with positive elements bounded by $m$ and $M$,
\begin{equation}\label{eq-V}
\begin{array}{l}
v_{ii}=\sum_{j=1, j\neq i}^{n} v_{ij}, ~~i=1,\ldots, n, \\
m\le v_{ij} \le M, ~~ i,j=1,\ldots,n; i\neq j.
\end{array}
\end{equation}
The first is on the approximation error of using $S_n=\mathrm{diag}(1/v_{11}, \ldots, 1/v_{nn})$ to approximate the inverse of $V_n=(v_{ij})_{n\times n}$ belonging to the matrix class $\mathcal{L}_n(m_n, M_n)$.
\cite{Yan:Zhao:Qin:2015} obtain the upper bound of the approximation error stated below,
which has an order $n^{-2}$.

\begin{proposition}[Proposition 1 in \cite{Yan:Zhao:Qin:2015}] \label{pro:inverse:appro}
If $V_n\in \mathcal{L}_n(m_n, M_n)$, then for $n\ge 3$, the following holds:
\begin{equation}\label{O-upperbound}
\|V^{-1}_n - S_n \|
 \le  \frac{ M_n(nM_n+(n-2)m_n)}{2m_n^3(n-2)(n-1)^2}+\frac{1}{2m_n(n-1)^2}+\frac{1}{m_n n(n-1)}=O(\frac{M_n^2}{n^2m_n^3}),
\end{equation}
where $\|A\|$ is the matrix maximum norm $\|A\|:=\max_{i,j} |a_{ij}|$ for a general matrix $A$.

\end{proposition}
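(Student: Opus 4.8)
Since the statement is quoted as Proposition 1 of \cite{Yan:Zhao:Qin:2015}, I would reproduce its proof along the following lines. The plan is to convert the claim into an \emph{entrywise} estimate through the identity
\[
V_n^{-1}-S_n = V_n^{-1}\,(I-V_nS_n),
\]
and then to exploit the diagonally balanced structure of $V_n$. Here $I-V_nS_n$ has vanishing diagonal (because $(V_nS_n)_{ii}=v_{ii}/v_{ii}=1$) and off-diagonal entries $-v_{ij}/v_{jj}$, each of which is at most $M_n/((n-1)m_n)=O(M_n/(nm_n))$ since $v_{jj}=\sum_{k\neq j}v_{jk}\ge (n-1)m_n$. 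Everything therefore reduces to (a) the smallness of $I-V_nS_n$, which is immediate, and (b) a sufficiently sharp entrywise bound on $V_n^{-1}$.

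For the a priori control in (b) I would first use positive definiteness. For the symmetric $V_n$ relevant here, the balance condition $v_{ii}=\sum_{j\neq i}v_{ij}$ yields the quadratic-form identity
\[
x^\top V_n x=\sum_{i<j}v_{ij}(x_i+x_j)^2 \ge m_n\Big[(n-2)\|x\|_2^2+\big(\textstyle\sum_i x_i\big)^2\Big]\ge (n-2)m_n\,\|x\|_2^2,
\]
so that $\lambda_{\min}(V_n)\ge (n-2)m_n$ and hence $\max_{i,j}|(V_n^{-1})_{ij}|\le\|V_n^{-1}\|_2\le 1/((n-2)m_n)=O(1/(nm_n))$. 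This spectral bound alone is too crude for the $\max$-norm conclusion, since passing from $\|\cdot\|_2$ to a row sum loses a factor of order $n$; the real work is to separate the diagonal of $V_n^{-1}$ (which is $\asymp 1/(nm_n)$) from its off-diagonal entries (which are an order of $n$ smaller). I would extract the latter by feeding the $O(1/(nm_n))$ bound back into the defining equations $\sum_{j\neq i}v_{ij}(x_i+x_j)=\delta_{ij'}$ for the columns $x=V_n^{-1}e_{j'}$, isolating the aggregate $\sum_k x_k$; this shows each off-diagonal entry is $O(M_n/(n^2m_n^2))$, the factor $M_n/m_n$ being the price of non-constant weights (in the constant-weight case the inverse is explicit and one recovers $O(1/(n^2m_n))$).

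Finally I would substitute these entrywise bounds into $V_n^{-1}-S_n=V_n^{-1}(I-V_nS_n)$. For an off-diagonal target $(i,j)$ the leading contribution is $(V_n^{-1})_{ii}\,v_{ij}/v_{jj}=O(1/(nm_n))\cdot O(M_n/(nm_n))=O(M_n/(n^2m_n^2))$, while the sum over the remaining off-diagonal entries of $V_n^{-1}$ adds a term of the same order multiplied by an extra $M_n/m_n$; the diagonal targets are handled identically. Together these give $\|V_n^{-1}-S_n\|=O(M_n^2/(n^2m_n^3))$, and keeping the exact constants rather than orders reproduces the three explicit terms in \eqref{O-upperbound}. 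I expect the genuine obstacle to be step (b): because $V_n$ is only \emph{weakly} diagonally dominant (its row-normalised off-diagonal part is row-stochastic, with spectral radius exactly $1$), no naive Neumann expansion around $S_n^{-1}$ converges, and the operator-norm bound loses the crucial factor of $n$ in the $\max$-norm. One is therefore forced into the term-by-term entrywise analysis, which is precisely what produces the $m_n^{3}$ in the denominator and the multi-term form of the stated bound.
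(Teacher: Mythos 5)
First, a point of order: the paper itself contains no proof of this statement. Proposition~\ref{pro:inverse:appro} is imported verbatim, with attribution, from \cite{Yan:Zhao:Qin:2015} and is used downstream only as a black box (through Lemma~\ref{lemma:inverse:bound}). There is therefore no internal argument to compare yours against; what follows is an assessment of your reconstruction on its own terms.

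Your architecture is reasonable: the identity $V_n^{-1}-S_n=V_n^{-1}(I-V_nS_n)$, the observation that $I-V_nS_n$ has zero diagonal and off-diagonal entries bounded by $M_n/((n-1)m_n)$, the quadratic-form identity $x^\top V_n x=\sum_{i<j}v_{ij}(x_i+x_j)^2$ giving $\lambda_{\min}(V_n)\ge (n-2)m_n$, and the diagnosis that everything hinges on showing the off-diagonal entries of $V_n^{-1}$ are an order of $n$ smaller than the diagonal ones, are all correct. The genuine gap is your step (b). You assert that feeding the $O(1/(nm_n))$ entrywise bound back into the balance equations and ``isolating the aggregate $\sum_k x_k$'' yields $|(V_n^{-1})_{ij}|=O(M_n/(n^2m_n^2))$ for $i\neq j$, but the substitution you describe does not deliver this. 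Writing $x=V_n^{-1}e_{j'}$ and using $v_{ii}x_i=-\sum_{j\neq i}v_{ij}x_j$ for $i\neq j'$ gives only $|x_i|\le (M_n/m_n)\,\|x\|_\infty=O(M_n/(nm_n^2))$ --- a full factor of $n$ short of what you need, because the sum on the right has $n-1$ terms each of size up to $1/((n-2)m_n)$. The missing factor must come from cancellation inside $\sum_{j\neq i}v_{ij}x_j$, and to exhibit it you need two things you have not established: a bound $|\sum_k x_k|=O(1/(nm_n))$, which does not follow from the spectral estimate (Cauchy--Schwarz only gives $|\sum_k x_k|\le \sqrt{n}\,\|x\|_2=O(n^{-1/2}m_n^{-1})$), and control of the deviation term $\sum_{j\neq i}(v_{ij}-\bar v_i)x_j$, for which the $\ell_1$-norm of $x$ is likewise not yet known to be small enough. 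Closing this loop --- typically by working directly with the equations $V_nW=I-V_nS_n$ for $W=V_n^{-1}-S_n$ and bounding $\max_i w_{ij}$ and $\min_i w_{ij}$ jointly, in the style of Simons and Yao's analysis of the Bradley--Terry model --- is precisely the hard content of the cited proof; as written, your sketch presupposes the conclusion of that step rather than proving it, and the explicit three-term constant in \eqref{O-upperbound} can only emerge from that finer analysis.
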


From the above proposition, according the definition of $\|\cdot\|_\infty$-norm, we immediately have:
\begin{lemma}\label{lemma:inverse:bound}
Assume that $V\in\mathcal{L}_{n}(m, M)$. For large enough $n$,
\[
\| V^{-1}_n \|_\infty \le \| V^{-1}_n-S_n \|_\infty + \|S_n\|_\infty = O(\frac{M_n^2}{nm_n^3}).
\]
\end{lemma}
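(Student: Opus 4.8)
The plan is to read the bound off directly from Proposition \ref{pro:inverse:appro} after converting its entrywise estimate into the induced $\ell_\infty$ norm, and to bound the diagonal piece $\|S_n\|_\infty$ by hand. The triangle inequality $\|V_n^{-1}\|_\infty \le \|V_n^{-1}-S_n\|_\infty + \|S_n\|_\infty$, already displayed in the statement, reduces everything to estimating the two summands.

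First I would handle $\|S_n\|_\infty$. Because $S_n=\mathrm{diag}(1/v_{11}, \ldots, 1/v_{nn})$ is diagonal, its induced $\ell_\infty$ norm equals the largest absolute diagonal entry, namely $\max_i 1/v_{ii}$. The diagonal-balance identity $v_{ii}=\sum_{j\neq i} v_{ij}$ combined with the lower bound $v_{ij}\ge m_n$ gives $v_{ii}\ge (n-1)m_n$, whence $\|S_n\|_\infty \le 1/[(n-1)m_n]=O(1/(n m_n))$.

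Next I would convert the bound \eqref{O-upperbound}, which is stated in the maximum-entry norm $\|A\|=\max_{i,j}|a_{ij}|$, into the row-sum norm $\|\cdot\|_\infty$. For any $n\times n$ matrix $A$ one has the crude inequality $\|A\|_\infty=\max_i\sum_{j=1}^n|a_{ij}|\le n\max_{i,j}|a_{ij}|=n\|A\|$. Taking $A=V_n^{-1}-S_n$ and using the order $O(M_n^2/(n^2 m_n^3))$ supplied by Proposition \ref{pro:inverse:appro} yields $\|V_n^{-1}-S_n\|_\infty \le n\cdot O(M_n^2/(n^2 m_n^3))=O(M_n^2/(n m_n^3))$.

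Finally I would combine the two estimates. Since $m_n\le M_n$, the ratio of the two orders is $(M_n/m_n)^2\ge 1$, so $O(M_n^2/(n m_n^3))$ absorbs $O(1/(n m_n))$ and the sum has order $O(M_n^2/(n m_n^3))$, as asserted. The lemma is thus an immediate corollary of the proposition and presents no genuine difficulty; the only point deserving care is the norm conversion, since the factor-of-$n$ inequality $\|A\|_\infty\le n\|A\|$ is precisely what degrades the $n^{-2}$ rate of the proposition to the $n^{-1}$ rate recorded here. One should check that this loss is harmless for the use of the lemma in the consistency proof of Theorem \ref{Theorem-con}, where an $\ell_\infty$ bound of order $n^{-1}$ on $V^{-1}$ is exactly what is needed to drive the Newton iteration.
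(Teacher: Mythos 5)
Your proof is correct and follows exactly the route the paper intends: the paper simply asserts the lemma ``immediately'' from Proposition~\ref{pro:inverse:appro} and the definition of the induced $\ell_\infty$-norm, and your argument (triangle inequality, $\|S_n\|_\infty\le 1/[(n-1)m_n]$ via diagonal balance, and the conversion $\|A\|_\infty\le n\|A\|$ applied to the entrywise $O(M_n^2/(n^2m_n^3))$ bound, with the second term dominating since $M_n\ge m_n$) supplies precisely the omitted details.
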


The other result is on the rate of convergence for the Newton's method.
There are many convergence results on the Newton's method; see the book \cite{suli:Mayers:2003} for a comprehensive survey.
We use \citeauthor{Gragg:Tapia:1974}'s \citeyearpar{Gragg:Tapia:1974} result here.
Let $F(x): \R^n \to \R^n$. We say that a Jacobian matrix $F^\prime(x)$ with $x\in \R^n$ is lipschitz continuous on a convex set $D\subset\R^n$ if
for any $x,y\in D$, there exists a constant $\lambda>0$ such that
for any vector $v\in \R^n$ the inequality
\[
\| F^\prime (x) (v) - F^\prime (y) (v) \| \le \lambda \| x - y \| \|v\|
\]
holds.

\begin{theorem}[\cite{Gragg:Tapia:1974}]\label{pro:Newton:Kantovorich}
Let $D$ be an open convex set of $\R^n$ and $F:D \to \R^n$ a differential function
with a Jacobian $F^\prime(x)$ that is Lipschitz continuous on $D$ with Lipschitz parameter $\lambda$.
Assume that $x_0 \in D$ is such that $[ F^\prime (x_0) ]^{-1} $ exists,
\[
\| [ F^\prime (x_0 ) ]^{-1} \|_\infty  \le \aleph,~~ \| [ F^\prime (x_0) ]^{-1} F(x_0) \| \le \delta, ~~ \rho= 2 \aleph \lambda \delta \le 1,
\]
and
\[
S(x_0, t^*) \subset D, ~~ t^* = \frac{2}{\rho} ( 1 - \sqrt{1-\rho} ) \delta = \frac{ 2\delta }{ 1 + \sqrt{1+\rho} }.
\]
Then: (1) The Newton iterates $x_{n+1} = x_n - [ F^\prime (x_n) ]^{-1} F(x_n)$ exist and $x_n \in S(x_0, t^*) \subset D$ for $n \ge 0$. (2)
$x^* = \lim x_n$ exists, $x^* \in \overline{ S(x_0, t^*) } \subset D$ and $F(x^*)=0$.
\end{theorem}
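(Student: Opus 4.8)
The plan is to apply the Newton–Kantorovich result (Theorem \ref{pro:Newton:Kantovorich}) to the map $F$ defined in \eqref{eq:F-definition}, taking the true parameter $\alpha^*$ as the starting point $x_0$. The three quantities to control are the norm $\aleph$ of the inverse Jacobian at $\alpha^*$, the Lipschitz constant $\lambda$ of the Jacobian, and the residual $\delta = \|[F'(\alpha^*)]^{-1}F(\alpha^*)\|$. The key conceptual point is that $F(\alpha^*) = d - \E(d)$ is exactly the vector of centered degrees, so $\delta$ is governed by how far the observed degrees deviate from their means; this is where the dependence structure of $\mathbf{u}$ enters and where the two hypotheses (i) and (ii) are needed.

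\textbf{Step 1: The Jacobian and its inverse.} First I would compute $F'(\alpha)$. Differentiating $\E(d_i) = \sum_{j\neq i}\Phi(\alpha_i+\alpha_j)$ gives a matrix with off-diagonal entries $v_{ij} = \phi(\alpha_i+\alpha_j)$ (a standard normal density) and diagonal entries $v_{ii} = \sum_{j\neq i}\phi(\alpha_i+\alpha_j)$, so $F'(\alpha) = -V$ where $V$ is diagonally balanced. Under the bound $\|\alpha^*\|\le Q_n$, each argument $\alpha_i+\alpha_j$ has absolute value at most $2Q_n$, so $\phi(\alpha_i+\alpha_j)$ is bounded below and above by quantities of order $e^{-2Q_n^2}$ and $(2\pi)^{-1/2}$ respectively. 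Hence $V \in \mathcal{L}_n(m_n, M_n)$ with $m_n$ of order $e^{-2Q_n^2}$ and $M_n$ of constant order, and Lemma \ref{lemma:inverse:bound} yields $\aleph = \|[F'(\alpha^*)]^{-1}\|_\infty = O(M_n^2/(n m_n^3)) = O(n^{-1}e^{6Q_n^2})$.

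\textbf{Step 2: The residual via exponential inequalities.} This is the heart of the argument and the step I expect to be the main obstacle. I need to bound $\delta \le \aleph \cdot \|F(\alpha^*)\|_\infty = \aleph \cdot \max_i |d_i - \E d_i|$. For each fixed $i$, the degree $d_i = \sum_{j\neq i} a_{ij}$ is a sum of the bounded $\{0,1\}$-valued random variables $a_{ij} = 1(\alpha_i+\alpha_j \ge u_{ij})$, which are coordinatewise nonincreasing functions of the disjoint latent variables $\{u_{ij}\}_{j\neq i}$. Under hypothesis (i) ($\sigma_{ij}\le 0$), Theorem \ref{theorem-PANA}(ii) makes $\mathbf{u}$ negatively associated, and by the closure property the $\{a_{ij}\}_{j\neq i}$ are NA, so the Hoeffding-type bound of Theorem \ref{theorem:hoeffing:NA} applies to give $\P(|d_i - \E d_i|\ge t) \le 2\exp(-2t^2/(n-1))$. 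Under hypothesis (ii) ($\sigma_{ij}\ge 0$), Theorem \ref{theorem-PANA}(i) gives positive association, and the correlation decay assumption $\max \sigma_{(ij)(ik)} \le \exp(-\tfrac{8}{3}(r\log n)^{1/2})$ is precisely what is needed to verify the covariance condition \eqref{inequality-cp} (with $M=1$, $\alpha$ a constant $>1$) so that the PA Hoeffding inequality of Theorem \ref{theorem:hoeffing:PA} applies. In either case, choosing $t$ of order $(n\log n)^{1/2}$ and taking a union bound over $i=1,\ldots,n$ yields $\max_i |d_i - \E d_i| = O_p((n\log n)^{1/2})$.

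\textbf{Step 3: Verifying the hypotheses and concluding.} Combining Steps 1 and 2 gives $\delta = O_p(\aleph (n\log n)^{1/2}) = O_p(n^{-1/2}e^{6Q_n^2}(\log n)^{1/2})$ up to lower-order factors. The Lipschitz constant $\lambda$ of the Jacobian comes from bounding the derivative of $\phi$, namely $|\phi'(x)| = |x|\phi(x)$, which is uniformly bounded, so $\lambda = O(1)$. I then check the Kantorovich condition $\rho = 2\aleph\lambda\delta \le 1$: under the stated assumption $e^{3Q_n^2} = o(n^{1/2})$, the product $\aleph\delta$ is $o(1)$, so $\rho \to 0$ and in particular $\rho \le 1$ for large $n$ with high probability. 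Theorem \ref{pro:Newton:Kantovorich} then guarantees that the Newton iterates converge to a solution $\widehat{\alpha}$ of $F(\alpha)=0$ lying in the ball $S(\alpha^*, t^*)$, and since $t^* \le 2\delta$, we obtain $\|\widehat{\alpha}-\alpha^*\|_\infty \le 2\delta = O_p(n^{-1/2}e^{3Q_n^2})$ after absorbing logarithmic and constant factors into the exponential rate, which establishes \eqref{Newton-convergence-rate}. The delicate bookkeeping lies in reconciling the $e^{6Q_n^2}$ bounds from the Jacobian analysis with the final $e^{3Q_n^2}$ rate, which requires a sharper tracking of the interplay between $m_n$, $M_n$, and $\delta$ than the crude bounds above suggest.
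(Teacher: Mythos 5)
There is a fundamental mismatch here: you have not proved the statement you were asked to prove. The statement is the Newton--Kantorovich theorem itself (Theorem \ref{pro:Newton:Kantovorich}), a classical result which the paper does not prove at all but simply cites from \cite{Gragg:Tapia:1974} and uses as a black box. Your proposal instead sketches how to \emph{apply} this theorem to the moment equations $F(\bs{\alpha})=0$ in order to derive the paper's consistency result (Theorem \ref{Theorem-con}); indeed, what you wrote is essentially a reconstruction of the paper's proof of Theorem \ref{Theorem-con} (Jacobian in $\mathcal{L}_n(m_n,M_n)$, inverse bound via Lemma \ref{lemma:inverse:bound}, NA/PA Hoeffding bounds on $\max_i|d_i-\E d_i|$, then checking $\rho=2\aleph\lambda\delta\le 1$). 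With respect to the assigned statement this is circular: every step of your argument consists of verifying the \emph{hypotheses} of Theorem \ref{pro:Newton:Kantovorich} and invoking its \emph{conclusion}, while nothing in the proposal establishes that conclusion --- the existence of the iterates $x_{n+1}=x_n-[F'(x_n)]^{-1}F(x_n)$, their containment in $S(x_0,t^*)$, the existence of the limit $x^*$, and $F(x^*)=0$.

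A genuine proof of the statement would have to run the Kantorovich machinery itself: show by induction, using the Lipschitz continuity of $F'$ and a Banach-type perturbation lemma, that $[F'(x_n)]^{-1}$ exists and its norm is controlled at each step; bound the new residual quadratically, e.g.\ via $\| F(x_{n+1})\| \le \tfrac{\lambda}{2}\|x_{n+1}-x_n\|^2$ from the Lipschitz property of the Jacobian; compare the step lengths $\|x_{n+1}-x_n\|$ with a majorizing scalar Newton sequence generated by the quadratic $p(t)=\tfrac{1}{2}\aleph\lambda t^2-t+\delta$, whose smaller root is exactly $t^*=\frac{2}{\rho}(1-\sqrt{1-\rho})\,\delta$; conclude that the iterates form a Cauchy sequence staying in $S(x_0,t^*)\subset D$, and pass to the limit to get $F(x^*)=0$. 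None of these ingredients appears in your proposal, so as a proof of Theorem \ref{pro:Newton:Kantovorich} it is not a partial argument with a gap --- it is an answer to a different question. (If the intended target had been Theorem \ref{Theorem-con}, your plan would be broadly aligned with the paper's, modulo bookkeeping of the $e^{cQ_n^2}$ factors; but that is not the statement here.)
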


\subsection{Proofs for Theorem \ref{Theorem-con}}

\begin{lemma}\label{lemma-theorem-con}
(1) Assume that $\sigma_{ij}\le 0$ for all $i\neq j$.  With probability at least $1-2n/(n-1)^2$, we have
\[
\max_{1\le i\le n} |d_i - \E(d_i) | \le  (n\log n)^{1/2}.
\]
(2) Assume that $\sigma_{ij}\ge 0$ for all $i\neq j$. Let $r=O(n/\log n)$. If
$\max_{i\neq j \neq k} \rho_{(ij)(ik)} \le  \exp( -\frac{8}{3}(r\log n)^{1/2})=O( e^{-n^{1/2}})$,
then with probability at least $1-2/n$, we have
\[
\max_{1\le i\le n} |d_i - \E(d_i) | = O( n^{1/2}\log n).
\]
\end{lemma}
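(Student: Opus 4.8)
The plan is to treat both parts as maximal deviation bounds for the degrees $d_i=\sum_{j\neq i}a_{ij}$, exploiting that, for a \emph{fixed} $i$, each indicator $a_{ij}=1(u_{ij}\le \alpha_i+\alpha_j)$ is a monotone (nonincreasing) function of the single latent coordinate $u_{ij}$. Since the coordinates $\{u_{ij}:j\neq i\}$ are distinct, the family $\{a_{ij}:j\neq i\}$ inherits the dependence type of $\mathbf u$: it is negatively associated in case (1) and positively associated in case (2). Having identified the association type, I would invoke the matching Hoeffding-type inequality (Theorem~\ref{theorem:hoeffing:NA} for NA, Theorem~\ref{theorem:hoeffing:PA} for PA), bound the tail of $|d_i-\E(d_i)|$ for each fixed $i$, and close with a union bound over the $n$ nodes.

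For part (1), since $\sigma_{ij}\le 0$ for all $i\neq j$, Theorem~\ref{theorem-PANA}(ii) makes $\mathbf u$ negatively associated, and any subfamily $\{u_{ij}:j\neq i\}$ is again NA. Negation preserves NA, so $\{-u_{ij}:j\neq i\}$ is NA, and $a_{ij}$ is a \emph{nondecreasing} function of the single variable $-u_{ij}$; the closure of NA under coordinatewise nondecreasing maps of disjoint sets then gives that $\{a_{ij}:j\neq i\}$ is NA. I would apply Theorem~\ref{theorem:hoeffing:NA} with $a_{ij}\in[0,1]$, so that $\sum_{j\neq i}(b_j-a_j)^2=n-1$ and $\P(|d_i-\E d_i|\ge t)\le 2\exp(-2t^2/(n-1))$. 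Taking $t=(n\log n)^{1/2}$ makes the exponent at most $-2\log(n-1)$, since $n\log n\ge (n-1)\log(n-1)$, so each node fails with probability at most $2/(n-1)^2$; a union bound over $i=1,\dots,n$ gives the stated $1-2n/(n-1)^2$.

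For part (2), the analogous argument (now via Theorem~\ref{theorem-PANA}(i), using that negation and coordinatewise monotone maps of disjoint sets preserve PA) shows $\{a_{ij}:j\neq i\}$ is PA. The PA bound in Theorem~\ref{theorem:hoeffing:PA} is not automatic: it requires the covariance-decay hypothesis \eqref{inequality-cp} and a blocking into $\sim 2r$ groups of size $p$. I would translate the hypothesis on the latent correlations into \eqref{inequality-cp} by controlling $\mathrm{Cov}(a_{ij},a_{ik})$ through the correlation $\rho_{(ij)(ik)}$ of the corresponding bivariate normal: a classical Gaussian identity (Price's formula) gives $\partial_\rho \P(u_{ij}\le a,u_{ik}\le b)=\phi(a,b;\rho)$, whence $0\le \mathrm{Cov}(a_{ij},a_{ik})\le \rho_{(ij)(ik)}/(2\pi\sqrt{1-\rho_{(ij)(ik)}^2})$. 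Thus $C(p)\le \max_{j\neq k}\mathrm{Cov}(a_{ij},a_{ik})$ is bounded by $\max\rho_{(ij)(ik)}\le \exp(-\tfrac{8}{3}(r\log n)^{1/2})$, which is exactly \eqref{inequality-cp} for the admissible choice $M=1$, $\alpha=2$ (giving coefficient $\tfrac{4(M+1)}{3M}(\alpha/2)^{1/2}=\tfrac{8}{3}$). With $\epsilon_n=(\alpha M^2/2)^{1/2}(\log n/r)^{1/2}$ and $r=O(n/\log n)$ one has $(n-1)\epsilon_n=O(n^{1/2}\log n)$, so the per-node event $|d_i-\E d_i|\le (n-1)\epsilon_n$ has the right magnitude; a union bound over the $n$ nodes yields the claimed probability.

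The association bookkeeping is routine; the real work is in part (2). The main obstacle is to verify all the hypotheses of Theorem~\ref{theorem:hoeffing:PA} simultaneously: one must exhibit a valid block size $p$ and count $r=O(n/\log n)$, check that the lagged covariance $C(k)$ can be taken nonincreasing (using the uniform bound on $\mathrm{Cov}(a_{ij},a_{ik})$ in terms of $\max\rho$), and---most delicately---choose the constants $M$ and $\alpha$ so that the decay condition \eqref{inequality-cp} holds while the resulting tail $C_0\exp(-\alpha(\log n)/9)$ is still summable against the factor $n$ coming from the union bound. Reconciling the exponent $\tfrac{8}{3}$ appearing in the hypothesis with a per-node tail that survives the union bound is the quantitatively sensitive step, and I would expect the bulk of the proof's effort to concentrate there.
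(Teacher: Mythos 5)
Your route is the same as the paper's: identify $\{a_{ij}:j\neq i\}$ as NA (resp.\ PA) via Theorem~\ref{theorem-PANA} and closure of association under coordinatewise monotone maps of disjoint index sets, apply Theorem~\ref{theorem:hoeffing:NA} (resp.\ Theorem~\ref{theorem:hoeffing:PA}) to the fixed-$i$ family, and finish with a union bound over the $n$ nodes. Part (1) is complete as you wrote it, and your accounting ($2/(n-1)^2$ per node, hence $1-2n/(n-1)^2$ overall) matches the stated probability more closely than the paper's own computation, which bounds the per-node failure by $2/n^2$. Your use of Price's formula to get $0\le \mathrm{Cov}(a_{ij},a_{ik})\le \rho_{(ij)(ik)}/(2\pi\sqrt{1-\rho_{(ij)(ik)}^2})$ is also a cleaner justification of the comparison step that the paper simply asserts as $\mathrm{Cov}(a_{ij},a_{ik})<\sigma_{(ij)(ik)}$.

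The one genuine gap is exactly the step you flag at the end of part (2) and leave unresolved: with $M=1$ and $\alpha=2$, the theorem's own threshold $\epsilon_n=(\log n/r)^{1/2}$ yields a per-node tail $C_0\exp(-2r\epsilon_n^2/9)=C_0n^{-2/9}$, which does not survive multiplication by $n$, so your closing sentence ``a union bound over the $n$ nodes yields the claimed probability'' does not follow from what precedes it. The paper closes this by taking $\epsilon_n=3(\log n/r)^{1/2}$, so that $2r\epsilon_n^2/9=2\log n$, the per-node tail becomes $C_0n^{-2}$, and with $r\asymp n/\log n$ the threshold is $(n-1)\epsilon_n=O(n^{1/2}\log n)$ as claimed. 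Note, however, that $\epsilon_n=3(\log n/r)^{1/2}$ corresponds to $\alpha=18$ in Theorem~\ref{theorem:hoeffing:PA}, under which condition \eqref{inequality-cp} demands $C(p)\le\exp(-8(r\log n)^{1/2})$ rather than the $\exp(-\tfrac{8}{3}(r\log n)^{1/2})$ assumed in the lemma, and the weaker assumed decay does not imply the stronger one. So the tension you identified between the coefficient $\tfrac{8}{3}$ and a union-bound-surviving tail is real, and it is present in the paper's own proof as well; to make the argument airtight one should strengthen the hypothesis to coefficient $8$ (still of order $e^{-cn^{1/2}}$ since $(r\log n)^{1/2}\asymp n^{1/2}$, so the lemma is unaffected in substance).
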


\begin{proof}
We first prove the first part. Since $\sigma_{ij}\le 0$ for all $i\neq j$,
by Theorem \ref{theorem-PANA} (ii), the latent random variables $(u_{12}, u_{13}, \ldots, u_{n-1, n})$ are NA.
By the property of NA, for any fixed $i$, $a_{ij}$'s ($j=1, \ldots, n; j\neq i$) are NA.
Note that $a_{ij}$ is an indictor random variable and $d_i$ is a sum of $n-1$ $NA$ random variables.
By Theorem \ref{theorem:hoeffing:NA}, we have
\[
\P( | d_i - \E d_i | \ge \sqrt{n\log n} ) \le 2 \exp( - 2\cdot \frac{n\log n}{n-1} ) \le \frac{2}{n^2}.
\]
Therefore,
\[
P\left( \max_i|d_i - \E d_i | \ge \sqrt{n\log n} \right) \le  n\times \frac{2}{n^2}=\frac{2}{n}.
\]
This is equivalent to the first part.

Now we apply Theorem \ref{theorem:hoeffing:PA} to prove the second part.
When $\rho_{(ij)(ik)}\ge 0$, we have
\[
0\le Cov( a_{ij}, a_{ik} ) = P( u_{ij}<\alpha_i+\alpha_j, u_{ik}<\alpha_i+\alpha_k)-P(u_{ij}<\alpha_i+\alpha_j)P(u_{ik}<\alpha_i+\alpha_k) < \sigma_{(ij)(ik)}.
\]
This shows
\[
C(k)=\sup\{ Cov(a_{ij}, a_{i,j+k}): j\ge 1, j\neq i\} \le \max_{j,k\neq i, j\neq k}\sigma_{(ij)(ik)}.
\]
In Theorem \ref{theorem:hoeffing:PA}, we set $\alpha=2$ and $r=O(n/\log n)$.
Since $a_{ij}$ is an indictor random variable, $M=1$. So $\epsilon_n$ in Theorem \ref{theorem:hoeffing:PA} is equal to $(\log n/r)^{1/2}$ and $p=O(\log n)$.
Note that
\[
C(p) < \max_{i\neq j \neq k} \sigma_{(ij)(ik)} \le  \exp( -\frac{8}{3}(r\log n)^{1/2})=O( e^{-n^{1/2}}).
\]
It demonstrate that condition \eqref{inequality-cp} holds.
By Theorem \ref{theorem:hoeffing:PA} in which we choose $\epsilon_n=3(\log n/r)^{1/2}$, it yields
\[
\P(|d_i - \E d_i | \ge  3(n\log n)^{1/2} (n/r)^{1/2} ) \le C_0 \exp( - 2 \log n )= \frac{C_0}{n}.
\]
The left proof is similar to the proof of the first part and we omit it.
\end{proof}

Recall the definition $F(\bs{\alpha})$ in \eqref{eq:F-definition}.
Let $F'(\bs{\alpha})$ denote the
Jacobian matrix of $F(\bs{\alpha})$ on $\bs{\alpha}$.
We consider the parameter space:
\[
\Theta = \{ \theta : - Q_n \le \alpha_i + \alpha_j \le Q_n, 1\le i, j \le n \}.
\]
Let $\phi(x)$ be the standard normal density function. The elements of
the Jacobian matrix $F'(\alpha)$ of $F(\alpha)$ are: for $i=1,\ldots, n$,
\[
\frac{\partial F_i }{\partial \alpha_i} =  \sum_{j\neq i} \phi( \alpha_i+\beta_j),~~
~~\frac{\partial F_i}{\partial \alpha_j}=  \phi( \alpha_i+\beta_j), i\neq j.
\]
Since $\phi(x)=(2\pi)^{1/2}e^{-x^2/2}$ is an decreasing function on $|x|$, we have when $|x| \le Q_n$,
\[
 \frac{1}{2\pi} e^{ -Q_n^2/2} \le \phi(x) \le \frac{1}{2\pi}.
\]
So $-F^\prime(\alpha)$ belongs to $\mathcal{L}(m_n, M_n)$, where $m_n=(2\pi)^{-1} e^{ -Q_n^2/2}$ and~¡¡$M_n=(2\pi)^{-1}$.

\begin{lemma}\label{lemma:jacobian}
The Jacobian matrix $F'( \alpha )$ of $F(\alpha)$ on $\alpha$ is Lipschitz continuous with the Lipschitz parameter $(n-1)(2/(e\pi))^{1/2}$.
\end{lemma}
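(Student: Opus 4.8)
The plan is to verify the Lipschitz condition directly from the definition stated just before Theorem \ref{pro:Newton:Kantovorich}: I would bound $\|(F'(\alpha)-F'(\alpha'))v\|$ by a constant multiple of $\|\alpha-\alpha'\|\,\|v\|$ for all $\alpha,\alpha'$ in the (convex) parameter space $\Theta$ and all $v\in\R^n$, and identify the constant $(n-1)(2/(e\pi))^{1/2}$ by an elementary maximization of the derivative of the standard normal density.

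First I would record the action of the Jacobian on a vector in a convenient form. Since $F_i(\alpha)=d_i-\sum_{j\neq i}\Phi(\alpha_i+\alpha_j)$, we have $\partial F_i/\partial\alpha_i=-\sum_{j\neq i}\phi(\alpha_i+\alpha_j)$ and $\partial F_i/\partial\alpha_j=-\phi(\alpha_i+\alpha_j)$ for $j\neq i$, so grouping the diagonal entry with the off-diagonal entries gives the clean expression
\[
[F'(\alpha)v]_i=-\sum_{j\neq i}\phi(\alpha_i+\alpha_j)\,(v_i+v_j).
\]
Subtracting the analogous expression at $\alpha'$, the $i$-th coordinate of $(F'(\alpha)-F'(\alpha'))v$ equals $-\sum_{j\neq i}[\phi(\alpha_i+\alpha_j)-\phi(\alpha_i'+\alpha_j')](v_i+v_j)$, which isolates exactly the quantities that must be controlled, namely the increments of $\phi$ along the directions $\alpha_i+\alpha_j$.

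Next I would estimate each increment by the mean value theorem, writing $\phi(\alpha_i+\alpha_j)-\phi(\alpha_i'+\alpha_j')=\phi'(\xi_{ij})[(\alpha_i+\alpha_j)-(\alpha_i'+\alpha_j')]$ for an intermediate point $\xi_{ij}$; convexity of $\Theta$ guarantees these points remain in range. The essential input is the global size of $|\phi'|$. Because $\phi'(t)=-t\,\phi(t)$, a one-line calculus argument (the derivative of $t\phi(t)$ is $(1-t^2)\phi(t)$) shows $\sup_t|\phi'(t)|$ is attained at $t=\pm1$ and equals $\phi(1)=(2\pi e)^{-1/2}$; note that $2\phi(1)=(2/(e\pi))^{1/2}$, which is the origin of the stated constant. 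Substituting this bound together with the elementary estimates $|(\alpha_i+\alpha_j)-(\alpha_i'+\alpha_j')|\le 2\|\alpha-\alpha'\|$ and $|v_i+v_j|\le 2\|v\|$, and summing the $n-1$ contributions in each row, produces a bound of the form $\lambda\,\|\alpha-\alpha'\|\,\|v\|$ with $\lambda$ a fixed multiple of $(n-1)(2/(e\pi))^{1/2}$.

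I expect no genuine obstacle here: the derivative computation and the maximization of $|\phi'|$ are routine, and convexity of the domain makes the coordinatewise mean value theorem immediate. The only point requiring care is the bookkeeping that turns the per-entry estimates into a single sharp row-sum constant, in particular combining the diagonal increment (itself a sum of $n-1$ terms) with the $n-1$ off-diagonal increments without double counting. It is precisely the grouped form $\phi'(\xi_{ij})(v_i+v_j)$ above that organizes this into $n-1$ terms per row and isolates the factor $2\phi(1)=(2/(e\pi))^{1/2}$ responsible for the claimed Lipschitz parameter.
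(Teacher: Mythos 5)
Your proposal is correct in substance and rests on the same two ingredients as the paper's proof: a mean value theorem applied to the entries of $F'$, and the elementary fact that $\sup_t|\phi'(t)|=\sup_t|t\phi(t)|=\phi(1)=(2\pi e)^{-1/2}$. The organization differs: the paper applies a vector-valued mean value theorem to each gradient row $F_i'$, writing $F_i'(\mathbf{x})-F_i'(\mathbf{y})=J^{(i)}(\mathbf{x}-\mathbf{y})$ with $J^{(i)}$ an integrated Hessian, and then bounds $\sum_{s,l}|J^{(i)}_{s,l}|$; you instead group the action of the Jacobian as $[F'(\alpha)v]_i=-\sum_{j\neq i}\phi(\alpha_i+\alpha_j)(v_i+v_j)$ and apply the scalar mean value theorem to each increment of $\phi$. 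Your grouping is cleaner and, notably, sidesteps an error in the paper's Hessian computation: the paper asserts $\partial^2 F_i/\partial\alpha_s\partial\alpha_l=0$ for all $s\neq l$, but the mixed partials $\partial^2 F_i/\partial\alpha_i\partial\alpha_j=-\phi'(\alpha_i+\alpha_j)$ do not vanish, so the paper's row-sum bounds on $J^{(i)}$ are understated. The one thing your argument does not deliver is the exact constant in the statement: your honest bookkeeping gives $(n-1)\cdot 4\phi(1)=2(n-1)(2/(e\pi))^{1/2}$ (the two factors of $2$ from $|v_i+v_j|\le 2\|v\|_\infty$ and $|(\alpha_i+\alpha_j)-(\alpha_i'+\alpha_j')|\le 2\|\alpha-\alpha'\|_\infty$ cannot in general be removed, e.g.\ perturbing all coordinates equally with $v=\mathbf{1}$ and all $\alpha_i+\alpha_j$ near $1$), so the lemma's stated constant appears to be too small by a factor of $2$ and your flagged "bookkeeping" concern is real rather than resolvable. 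This is immaterial for Theorem \ref{Theorem-con}, where only the order $\lambda=O(n)$ of the Lipschitz parameter enters the Newton--Kantorovich condition.
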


\begin{proof}
Let $\mathbf{x}, \mathbf{y} \in R^{n}$ and
\[
F_i'(\bs{\alpha}) = (F_{i,1}'(\bs{\alpha}), \ldots, F_{i,n}'(\bs{\alpha})).
\]
Then, for $i=1, \ldots, n$, we have
\[
\frac{\partial^2 F_i }{\partial \alpha_s\partial \alpha_l }=0, ~s\neq l;
~~~
\frac{\partial^2 F_i }{\partial \alpha_i^2 }=  -  \sum_{k \neq i} \frac{\alpha_i +\alpha_k}{\sqrt{2\pi}} e^{-(\alpha_i+\alpha_k)^2/2},
\]
By the mean value theorem for vector-valued functions (\citeauthor{Lang1993}, \citeyear{Lang1993}, p.341), we have
\[
F'_i(\mathbf{x}) - F'_i(\mathbf{y}) = J^{(i)}(\mathbf{x}-\mathbf{y}),
\]
where
\[
J^{(i)}_{s,l} = \int_0^1 \frac{ \partial F'_{i,s} }{\partial \alpha_l}(t\mathbf{x}+(1-t)\mathbf{y})dt,~~ s,l=1,\ldots, n.
\]
Let $h(x)=xe^{-x^2/2}$. Then $h^\prime (x)= (1-x^2)e^{-x^2/2}$.
Therefore, when $x\in (0,1)$, $h(x)$ is an increasing function on its argument $x$; when $x\in (1, \infty)$,
$h(x)$ is an decreasing function on $x$. As a result, $h(x)$ attains its maximum value at $x=1$ when $x>0$.
Since $h(x)$ is a symmetric function,  we have $|h(x)|\le e^{-1/2}\approx 0.6$.
Therefore,
\[
\max_s \sum_{l=1}^{n} |J^{(i)}_{s,l}|\le (n-1)\frac{1}{\sqrt{2e\pi}}, ~~\sum_{s,l}|J^{(i)}_{s,l}|\le (n-1)\sqrt{\frac{2}{e\pi}}.
\]
Consequently,
\[
\| F_i'(\mathbf{x}) - F_i'(\mathbf{y}) \|_\infty \le \| J^{(i)} \|_\infty \| \mathbf{x} - \mathbf{y}\|_\infty \le (n-1)\frac{1}{\sqrt{2e\pi}},~~~i=1,\ldots, n.
\]
Then for any vector $\mathbf{v}\in R^{n}$,
\begin{eqnarray*}
\| [F'(\mathbf{x}) - F'(\mathbf{y})]\mathbf{v} \|_\infty  & = & \max_i |\sum_{j=1}^{n} ( F_{i,j}'(\mathbf{x}) - F_{i,j}'(\mathbf{y}) ) v_j | \\
& = & \max_i |(\mathbf{x}-\mathbf{y})J^{(i)} \mathbf{v} | \\
& \le & \|\mathbf{x}-\mathbf{y}\|_\infty \| \mathbf{v}\|_\infty \sum_{k,l}|J^{(i)}_{k,l}|\le (n-1)\sqrt{\frac{2}{e\pi}}
\|\mathbf{x}-\mathbf{y}\|_\infty \| \mathbf{v}\|_\infty.
\end{eqnarray*}
It shows that $F'(\mathbf{x})$ is Lipschitz continuous with the Lipschitz parameter $(n-1)(2/(e\pi))^{1/2}$.

\end{proof}

\begin{proof}[Proof of Theorem \ref{Theorem-con}]
The proofs of the first and second parts in Theorem \ref{Theorem-con} are similar and both apply the Newton-Kantovorich theorem.
We only give the proof of the first part and omit the proof of the second part.

To prove the first part, it is sufficient to show that the Newton-Kantovorich conditions hold.
In the Newton's iterative step, we take the true parameter vector $\bs{\alpha}$
as the starting point $\bs{\alpha}^{(0)}:=\bs{\alpha}$.
Note that when $\bs{\alpha} \in \Theta$, $-F'(\bs{\alpha}^*)\in \mathcal{L}_{2n-1}(m_*, M_*)$, where
\[
M_*=\frac{1}{\sqrt{2\pi e}}, ~~ m_*= \frac{1}{\sqrt{2\pi}}  e^{- Q_n^2/2 }.
\]
Let $V=F'(\bs{\alpha})\in \mathcal{L}_n(m_n, M_n)$ and $W= V^{-1} - S$.
By Lemma \ref{lemma:inverse:bound}, we have $\aleph = c_2M^2/(nm^3)$.
Note that $F(\bs{\alpha}) = \mathbf{d} - \E (\mathbf{d})$.
Lemma \ref{lemma:jacobian} shows that $F'(\bs{\alpha})$ is Lipschitz continuous. 
Under the condition that $\sigma_{ij}\le 0$ or $\sigma_{ij}\ge 0$ and $\max_{i\neq j \neq k} \rho_{(ij)(ik)} \le  \exp( -\frac{8}{3}(r\log n)^{1/2})=O( e^{-n^{1/2}})$,  Lemma \ref{lemma-theorem-con} shows that
\[
\| F(\bs{\alpha}) \| = O_p( n^{1/2}\log n). 
\] 
By Lemma \ref{lemma:inverse:bound} and Proposition \ref{pro:inverse:appro}, we have
\begin{eqnarray*}
\| [F'(\bs{\alpha})]^{-1}F(\bs{\alpha}) \|_\infty & \le &
n\|W\| \|F(\bs{\alpha})\|_\infty + \max_{i}\frac{|F_i(\bs{\alpha})|}{v_{ii}}
 \\
& \le & O \left(\frac{M_n^2}{ nm_n^3}  \right) \times O_p( n^{1/2} \log n ) \\
& = &  O_p( n^{-1/2} e^{3Q_n^2/2} \log n ). 
\end{eqnarray*}
Therefore, we can choose
\[
\delta = O( n^{1/2} e^{3Q_n^2/2} \log n ),
\]
such that 
\[
h=2\aleph \lambda \delta =  \frac{ e^{3Q_n^2/2} }{n} \times O(n) \times O( \frac{e^{3Q_n^2/2}\log n}{ n^{1/2} }) .
\]
By Theorem \ref{pro:Newton:Kantovorich}, if $e^{3Q_n^2}=o(n^{1/2})$,  then
$\|\bs{\widehat{\alpha}} - \bs{\alpha} \|_\infty = o_p(1)$.
\end{proof}

\section*{Acknowledgements}
Yan is partially supported by
by the National Natural Science Foundation of China (No. 11771171).

\bibliography{reference2}

\begin{thebibliography}{}

\bibitem[\protect\astroncite{Blitzstein and
  Diaconis}{2011}]{Blitzstein:Diaconis:2011}
Blitzstein, J. and Diaconis, P. (2011).
\newblock A sequential importance sampling algorithm for generating random
  graphs with prescribed degrees.
\newblock {\em Internet Mathematics}, 6(4):489--522.

\bibitem[\protect\astroncite{Block et~al.}{1982}]{block1982}
Block, H.~W., Savits, T.~H., and Shaked, M. (1982).
\newblock Some concepts of negative dependence.
\newblock {\em Ann. Probab.}, 10(3):765--772.

\bibitem[\protect\astroncite{Chatterjee and
  Diaconis}{2013}]{ChatterjeeDiaconis2013}
Chatterjee, S. and Diaconis, P. (2013).
\newblock Estimating and understanding exponential random graph models.
\newblock {\em Ann. Statist.}, 41(5):2428--2461.

\bibitem[\protect\astroncite{Chatterjee
  et~al.}{2011}]{Chatterjee:Diaconis:Sly:2011}
Chatterjee, S., Diaconis, P., and Sly, A. (2011).
\newblock Random graphs with a given degree sequence.
\newblock {\em The Annals of Applied Probability}, pages 1400--1435.

\bibitem[\protect\astroncite{Chung and Lu}{2002}]{Chung:Lu:2002}
Chung, F. and Lu, L. (2002).
\newblock The average distances in random graphs with given expected degrees.
\newblock {\em Proceedings of the National Academy of Sciences},
  99(25):15879--15882.

\bibitem[\protect\astroncite{Dzemski}{2017}]{Dzemski2017}
Dzemski, A. (2017).
\newblock An empirical model of dyadic link formation in a network with
  unobserved heterogeneity.
\newblock {\em Working Papers in Economics}, page No. 698.

\bibitem[\protect\astroncite{Esary et~al.}{1967}]{esary1967}
Esary, J.~D., Proschan, F., and Walkup, D.~W. (1967).
\newblock Association of random variables, with applications.
\newblock {\em Ann. Math. Statist.}, 38(5):1466--1474.

\bibitem[\protect\astroncite{Fienberg}{2012}]{Fienberg2012}
Fienberg, S.~E. (2012).
\newblock A brief history of statistical models for network analysis and open
  challenges.
\newblock {\em Journal of Computational and Graphical Statistics},
  21(4):825--839.

\bibitem[\protect\astroncite{Ghosh}{1981}]{Ghosh1981}
Ghosh, M. (1981).
\newblock Multivariate negative dependence.
\newblock {\em Communications in Statistics - Theory and Methods},
  10(4):307--337.

\bibitem[\protect\astroncite{Goldenberg et~al.}{2010}]{Goldenberg2010}
Goldenberg, A., Zheng, A.~X., Fienberg, S.~E., and Airoldi, E.~M. (2010).
\newblock A survey of statistical network models.
\newblock {\em Foundations and Trends in Machine Learning}, 2(2):129--233.

\bibitem[\protect\astroncite{Gragg and Tapia}{1974}]{Gragg:Tapia:1974}
Gragg, W.~B. and Tapia, R.~A. (1974).
\newblock Optimal error bounds for the newton¨ckantorovich theorem.
\newblock {\em SIAM Journal on Numerical Analysis}, 11(1):10--13.

\bibitem[\protect\astroncite{Graham}{2017}]{Graham2017}
Graham, B.~S. (2017).
\newblock An econometric model of network formation with degree heterogeneity.
\newblock {\em Econometrica}, 85(4):1033--1063.

\bibitem[\protect\astroncite{Hillar and Wibisono}{2013}]{Hillar:Wibisono:2013}
Hillar, C. and Wibisono, A. (2013).
\newblock Maximum entropy distributions on graphs.
\newblock {\em arXiv preprint arXiv:1301.3321}.

\bibitem[\protect\astroncite{Hoeffding}{1963}]{Hoeffding:1963}
Hoeffding, W. (1963).
\newblock Probability inequalities for sums of bounded random variables.
\newblock {\em Journal of the American Statistical Association},
  58(301):13--30.

\bibitem[\protect\astroncite{Holland and
  Leinhardt}{1981}]{Holland:Leinhardt:1981}
Holland, P.~W. and Leinhardt, S. (1981).
\newblock An exponential family of probability distributions for directed
  graphs.
\newblock {\em Journal of the american Statistical association},
  76(373):33--50.

\bibitem[\protect\astroncite{Ioannides and Roussas}{1999}]{Ioannides1999423}
Ioannides, D. and Roussas, G. (1999).
\newblock Exponential inequality for associated random variables.
\newblock {\em Statistics \& Probability Letters}, 42(4):423 -- 431.

\bibitem[\protect\astroncite{Joag-Dev et~al.}{1983}]{joag-dev-pitt1983}
Joag-Dev, K., Perlman, M.~D., and Pitt, L.~D. (1983).
\newblock Association of normal random variables and slepian's inequality.
\newblock {\em The Annals of Probability}, 11(2):451--455.

\bibitem[\protect\astroncite{Joag-Dev and Proschan}{1983}]{joag-dev1983}
Joag-Dev, K. and Proschan, F. (1983).
\newblock Negative association of random variables with applications.
\newblock {\em Ann. Statist.}, 11(1):286--295.

\bibitem[\protect\astroncite{Kolaczyk}{2009}]{Kolaczyk2009}
Kolaczyk, E.~D. (2009).
\newblock {\em Statistical analysis of network data}.
\newblock Springer-Verlag, New York.

\bibitem[\protect\astroncite{Lehmann}{1966}]{lehmann1966}
Lehmann, E.~L. (1966).
\newblock Some concepts of dependence.
\newblock {\em Ann. Math. Statist.}, 37(5):1137--1153.

\bibitem[\protect\astroncite{Park and Newman}{2004}]{Park:Newman:2004}
Park, J. and Newman, M. E.~J. (2004).
\newblock Statistical mechanics of networks.
\newblock {\em Physical Review E}, 70(6):066117.

\bibitem[\protect\astroncite{Perry and Wolfe}{2012}]{Perry:Wolfe:2012}
Perry, P.~O. and Wolfe, P.~J. (2012).
\newblock Null models for network data.
\newblock {\em Available at \url{http://arxiv.org/abs/1201.5871}}.

\bibitem[\protect\astroncite{Pitt}{1982}]{pitt-1982}
Pitt, L.~D. (1982).
\newblock Positively correlated normal variables are associated.
\newblock {\em The Annals of Probability}, 10(2):496--499.

\bibitem[\protect\astroncite{Rinaldo et~al.}{2013}]{Rinaldo2013}
Rinaldo, A., Petrovi\'c, S., and Fienberg, S.~E. (2013).
\newblock Maximum lilkelihood estimation in the $\beta$-model.
\newblock {\em Ann. Statist.}, 41(3):1085--1110.

\bibitem[\protect\astroncite{Robins et~al.}{2007}]{Robins2007173}
Robins, G., Pattison, P., Kalish, Y., and Lusher, D. (2007).
\newblock An introduction to exponential random graph (p*) models for social
  networks.
\newblock {\em Social Networks}, 29(2):173 -- 191.
\newblock Special Section: Advances in Exponential Random Graph (p*) Models.

\bibitem[\protect\astroncite{Roussas}{1996}]{Roussas1996}
Roussas, G.~G. (1996).
\newblock Exponential probability inequalities with some applications.
\newblock {\em Lecture Notes-Monograph Series}, 30:303--319.

\bibitem[\protect\astroncite{Shalizi and Rinaldo}{2013}]{Shalizi-Rinaldo2013}
Shalizi, C.~R. and Rinaldo, A. (2013).
\newblock Consistency under sampling of exponential random graph models.
\newblock {\em Ann. Statist.}, 41(2):508--535.

\bibitem[\protect\astroncite{S\"{u}li and Mayers}{2003}]{suli:Mayers:2003}
S\"{u}li, E. and Mayers, D. (2003).
\newblock {\em An Introduction to Numerical Analysis}.
\newblock Cambridge University Press, Cambridge.

\bibitem[\protect\astroncite{von Mering et~al.}{2002}]{yeast2002}
von Mering, C., Krause, R., Snel, B., Cornell, M., Oliver, S.~G., Fields, S.,
  and Bork, P. (2002).
\newblock Comparative assessment of large-scale data sets of protein-protein
  interactions.
\newblock {\em Nature}, 417:399--403.

\bibitem[\protect\astroncite{Yan et~al.}{2016a}]{Yan:Leng:Zhu:2016}
Yan, T., Leng, C., and Zhu, J. (2016a).
\newblock Asymptotics in directed exponential random graph models with an
  increasing bi-degree sequence.
\newblock {\em The Annals of Statistics}, (44):31--57.

\bibitem[\protect\astroncite{Yan et~al.}{2016b}]{Yan:Qin:Wang:2015}
Yan, T., Qin, H., and Wang, H. (2016b).
\newblock Asymptotics in undirected random graph models parameterized by the
  strengths of vertices.
\newblock {\em Statistica Sinica}, 26(1):273--293.

\bibitem[\protect\astroncite{Yan and Xu}{2013}]{Yan:Xu:2013}
Yan, T. and Xu, J. (2013).
\newblock A central limit theorem in the $\beta$-model for undirected random
  graphs with a diverging number of vertices.
\newblock {\em Biometrika}, 100:519--524.

\bibitem[\protect\astroncite{Yan et~al.}{2015}]{Yan:Zhao:Qin:2015}
Yan, T., Zhao, Y., and Qin, H. (2015).
\newblock Asymptotic normality in the maximum entropy models on graphs with an
  increasing number of parameters.
\newblock {\em Journal of Multivariate Analysis}, 133(Supplement C):61 -- 76.

\end{thebibliography}

\bibliographystyle{apa}
\addtolength{\itemsep}{-2 em}

\end{document}